\pgfplotsset{compat=1.18}
\newtheorem{theorem}{Theorem}
\newtheorem{lemma}{Lemma}
\newtheorem{definition}{Definition}
\newtheorem{remark}{Remark}
\newcommand{\R}{\ensuremath{\mathbb{R}}}
\newcommand{\C}{\ensuremath{\mathbb{C}}}
\newcommand{\setstyle}[1]{\ensuremath{\mathcal{#1}}}
\newcommand{\nworker}{\ensuremath{n}\xspace}
\newcommand{\ngroup}{\ensuremath{m}\xspace}
\newcommand{\ngrad}{\ensuremath{p}\xspace}
\newcommand{\graddim}{\ensuremath{d}\xspace}
\newcommand{\tgrad}{\ensuremath{\mathbf{g}}\xspace}
\newcommand{\gestim}{\ensuremath{\mathbf{\widehat{g}}}\xspace}
\newcommand{\pgrad}[1][\gind]{\ensuremath{\mathbf{g}_{#1}}\xspace}
\newcommand{\gind}{\ensuremath{i}\xspace}
\newcommand{\wind}{\ensuremath{j}\xspace}
\newcommand{\groupind}{\ensuremath{k}\xspace}
\newcommand{\fieldsize}{\ensuremath{q}\xspace}
\newcommand{\galpha}{\ensuremath{\mathbb{F}}\xspace}
\newcommand{\nmalicious}{\ensuremath{s}\xspace}
\newcommand{\encfun}[1][]{\ensuremath{\operatorname{enc}_{#1}}\xspace}
\newcommand{\encind}{\ensuremath{e}\xspace}
\newcommand{\encfunset}{\ensuremath{\setstyle{E}}\xspace}
\newcommand{\decfun}[1][]{\ensuremath{\operatorname{dec}_{#1}}\xspace}
\newcommand{\nencfun}{\ensuremath{v}\xspace}
\newcommand{\replfact}[1][]{\ensuremath{\rho_{\textrm{#1}}}\xspace}
\newcommand{\commoh}[1][]{\ensuremath{\kappa_{\textrm{#1}}}\xspace}
\newcommand{\proto}{\ensuremath{\Pi}\xspace}
\newcommand{\nround}{\ensuremath{T}\xspace}
\newcommand{\localcomp}[1][]{\ensuremath{c_{#1}}\xspace}
\newcommand{\gcomp}[1][]{\ensuremath{\gamma}}
\newcommand{\params}{\ensuremath{\boldsymbol{\theta}}\xspace}
\newcommand{\loss}{\ensuremath{\ell}\xspace}
\newcommand{\gditer}{\ensuremath{\tau}\xspace}
\newcommand{\range}[1]{\ensuremath{[#1]}\xspace}
\newcommand{\roundind}{\ensuremath{t}\xspace}
\newcommand{\respdim}{\ensuremath{\zeta}\xspace}
\newcommand{\iresp}{\ensuremath{\mathbf{z}}\xspace}
\newcommand{\noisyiresp}{\ensuremath{\mathbf{\widetilde{z}}}\xspace}
\newcommand{\err}{\ensuremath{\mathbf{e}}\xspace}
\newcommand{\allocmat}[1][]{\ensuremath{\mathbf{A}_{#1}}}
\newcommand{\card}[1]{{\left\vert #1 \right\vert}}
\newcommand{\defeq}{\mathrel{\mathop:}=}
\newcommand{\BGC}{\text{$\nmalicious$-BGC}\xspace}
\newcommand{\N}{\mathbb{N}}
\newcommand{\nhonest}{\ensuremath{u}}
\newcommand{\tp}{{\mathsf{T}}}
\newcommand{\zero}{\ensuremath{\boldsymbol{0}}}
\newcommand{\one}{\ensuremath{\boldsymbol{1}}}
\newcommand{\G}{\ensuremath{\mathbf{G}}\xspace}
\newcommand{\W}{\ensuremath{\mathbf{W}}\xspace}
\newcommand{\Z}{\ensuremath{\mathbf{Z}}\xspace}
\newcommand{\noisyZ}{\ensuremath{\mathbf{\widetilde{Z}}}\xspace}
\newcommand{\B}{\ensuremath{\mathbf{B}}\xspace}
\newcommand{\E}{\ensuremath{\mathbf{E}}\xspace}
\newcommand{\MDS}{\ensuremath{\mathbf{F}}\xspace}
\newcommand{\rth}{\ensuremath{r}\xspace}
\newcommand{\groupset}[1][\groupind]{\ensuremath{\mathcal{G}_{#1}}}
\newcommand{\combvec}[1][\groupind]{\ensuremath{\mathbf{b}_{#1}}\xspace}
\newcommand{\assignmentset}{\ensuremath{{\mathcal{R}_{\replfact}}}}
\begin{document}

\title{Interactive Byzantine-Resilient Gradient Coding for General Data Assignments}

\author{
    \IEEEauthorblockN{Shreyas Jain, Luis Maßny, Christoph Hofmeister, Eitan Yaakobi, and Rawad Bitar}
    \thanks{SJ is with the Department of Mathematical Sciences at the Indian Institute of Science Education and Research, Mohali, India. Email: ms20098@iisermohali.ac.in}
    \thanks{CH, LM and RB are with the School of Computation, Information and Technology at the Technical University of Munich, Germany. Emails: \{christoph.hofmeister, luis.massny, rawad.bitar\}@tum.de}
    \thanks{EY is with the CS department of Technion---Israel Institute of Technology, Israel. Email: yaakobi@cs.technion.ac.il}
    \thanks{SJ was supported in part by a DAAD (German Academic Exchange Service)-WISE scholarship. This project is funded by the Bavarian Ministry of Economic Affairs, Regional Development and Energy within the scope of the 6G Future Lab Bavaria, and by DFG (German Research Foundation) projects under Grant Agreement No. WA 3907/7-1 and No. BI 2492/1-1.}
}

\maketitle

\begin{abstract}
We tackle the problem of Byzantine errors in distributed gradient descent within the Byzantine-resilient gradient coding framework. Our proposed solution can recover the exact full gradient in the presence of $\nmalicious$ malicious workers with a data replication factor of only $\nmalicious+1$. It generalizes previous solutions to any data assignment scheme that has a regular replication over all data samples. The scheme detects malicious workers through additional interactive communication and a small number of local computations at the main node, leveraging group-wise comparisons between workers with a provably optimal grouping strategy. The scheme requires at most $\nmalicious$ interactive rounds that incur a total communication cost logarithmic in the number of data samples.
\end{abstract}

\section{Introduction}
\label{sec:introduction}
Distributed training can speed up machine learning procedures and enable the training for huge datasets that exceed the local memory~\cite{abadi2016tensorflow}.
In a distributed learning setting, a central server, referred to as the \emph{main node}, distributes the computation load to a set of client machines, referred to as the \emph{workers}.
A popular algorithm is (stochastic) gradient descent, which is an iterative procedure. In every iteration, each worker computes a gradient of a certain loss function over a local dataset and sends it to the main node. The naive aggregation of the gradients, e.g., by averaging, is vulnerable to corrupt gradients from malicious workers~\cite{damaskinos2019aggregathor,blanchardMachineLearningAdversaries}, which are often modeled as worst-case errors caused by a Byzantine adversary~\cite{lamportByzantineGeneralsProblem}. 

Several coding-theoretic methods have emerged to tackle the problem of Byzantine adversaries in distributed computing~\cite{hofmeisterSecurePrivateAdaptive2022,tangAdaptiveVerifiableCoded2022,dataDataEncodingByzantineResilient2021,solankiNonColludingAttacksIdentification2019,keshtkarjahromiSecureCodedCooperative2019,subramaniam2019,yuLagrangeCodedComputing2019} for linear and polynomial function computations. However, the loss functions of machine learning algorithms are often highly nonlinear. Hence, a large body of the literature instead focuses on either the design of aggregation algorithms that are resilient to corruption from a limited number of workers~\cite{yinByzantineRobust2018,chenDistributedStatisticalMachine2017,alistarhByzantineStochastic2018,sohn2020ElectionCoding}
or on outlier detection methods~\cite{blanchardMachineLearningAdversaries,guerraoui2018hidden,el2020fast,xieZenoDistributedStochastic,rajputDETOXRedundancybasedFramework,konstantinidisRobustDetection,cao2019DGD}. These methods are applicable for general non-linear loss functions. However, they can only recover an approximation of the correct computation result, which reduces the learning speed, see e.g.,~\cite{bitarStochasticGradientCoding2020} and references therein, and might perform poorly when the distribution of the training data is not identical among the workers~\cite{chenRevisitingDistributedSynchronous2017,tandon}.

To exactly recover the gradient while using general non-linear loss functions, a coding-theoretic method mitigating malicious workers is proposed in~\cite{draco} and is extended to communication-efficiency in~\cite{chen2021solon}.
This method is based on the gradient coding framework~\cite{tandon}. The main idea is to replicate local datasets among the workers and exploit the computational redundancy to allow for error correction by coding over the computation results of the workers. However, this method requires a high replication and thus increases the computational overhead significantly.

In~\cite{hofmeisterTradingCommunication2023}, the authors introduced a general Byzantine-resilient gradient coding framework that halves the redundancy. The key idea is to identify malicious workers 
(effectively transforming errors into erasures) by allowing local computations at the main node and introducing an interactive communication protocol between the workers and the main node. For a fractional repetition data assignment, i.e., the workers are divided into groups that have the same local data sets, \cite{hofmeisterTradingCommunication2023} proposes a scheme that performs pair-wise comparisons of the workers' computation results, and resolves conflicts by local computations at the main node to identify malicious workers.

In this work, we adopt the framework of~\cite{hofmeisterTradingCommunication2023} but tackle the problem of Byzantine-resilient gradient coding for general data assignments.
We give a construction for a Byzantine-resilient gradient coding scheme. While the scheme adopts the idea of comparing partial results between workers and resolving conflicts by interactive messaging, a generalized strategy is developed to compare the results. In contrast to fractional repetition data assignments, pair-wise comparisons are not possible for general data assignments. Therefore, groups of workers are compared instead. We design a method to group the workers and compare their computations. We show through a lower bound on the number of group comparisons that our grouping strategy is optimal.

\section{Problem Setting}
\label{sec:problem-setting}
\paragraph*{Notation}
Scalars are denoted by
lower-case letters. Column vectors and matrices are denoted by bold lower-case and bold upper-case letters, respectively. Sets are denoted by calligraphic letters $\mathcal{A}$. $\mathbf{A}_{i,j}$ refers to the element in row $i$ and column $j$ of the matrix $\mathbf{A}$ . $\mathbf{A}_{\mathcal{R},\mathcal{C}}$ refers to the submatrix of $\mathbf{A}$ restricted to the rows indexed by $\mathcal{R} \subset \N$ and columns indexed by $\mathcal{C} \subset \N$. The matrices $\mathbf{A}_{\cdot,\mathcal{C}}$ and $\mathbf{A}_{\mathcal{R},\cdot}$ denote respectively the submatrix of $\mathbf{A}$ restricted to the columns in $\mathcal{C}$ and to the rows in $\mathcal{R}$. For an integer $a \geq 1$, let $\range{a} \defeq \left\{ 1,2,\dots,a \right\}$. Finally, $\one_{m \times n}$ and $\zero_{m \times n}$ denote the all-one and all-zero matrices of dimension $m \times n$. All proofs not presented in the main paper are deferred to the appendix.

We consider machine learning tasks that are formulated as an optimization problem of the form
$\min_{\params \in \R^d} \sum_{\gind=1}^{\ngrad} \loss(\params, x_\gind)$
over the dataset $\{x_1,\dots,x_\ngrad\}$ consisting of $\ngrad$ samples and a sample-wise loss function $\loss(\params, x_\gind)$. The parameters to be learned are arranged in a vector $\params \in \R^\graddim$. A gradient descent algorithm~\cite[Section 9.3]{boyd2004convex} is employed, such that the optimal solution $\params^\star$ is approached iteratively by computing for $\gditer>0$
\begin{equation*}
    \params_{\gditer+1} = \params_\gditer - \eta \sum_{\gind=1}^{\ngrad} \nabla\loss(\params_\gditer, x_\gind),
\end{equation*}
where the value $\sum_{\gind=1}^{\ngrad} \nabla\loss(\params_\gditer, x_\gind)$ is the gradient of the loss function evaluated at all the samples of the dataset. The first iterate $\params_0$ is chosen randomly.

Consider a distributed computing setting in which a number $\nworker$ of workers is hired to compute the partial gradients in a distributed manner. The main node coordinates the procedure. Each worker receives a subset of the samples and computes the respective partial gradients.
Widely used assignment schemes are fractional repetition and cyclic repetition~\cite{tandon}. Following~\cite{draco,chen2021solon,hofmeisterTradingCommunication2023}, this work studies the problem of reconstructing the gradient $\sum_{\gind=1}^{\ngrad} \nabla\loss(\params_\gditer, x_\gind)$ \emph{exactly} at the main node in every gradient descent iteration in the presence of $\nmalicious < \nworker$ malicious workers which are under the control of a Byzantine adversary. We focus on one iteration of the gradient descent algorithm; hence, we omit the iteration index $\gditer$.

In the sequel, we represent the gradients over a field $\mathbb{F}$ with a large (and potentially infinite) number of elements. More specifically, $\mathbb{F}$ can represent either the finite field $\mathbb{F}_\fieldsize$ with $\fieldsize > n$ and $\fieldsize \gg 1$ elements, the real numbers $\mathbb{R}$, or the complex numbers $\mathbb{C}$. 
Treating $\mathbb{F}$ abstractly avoids the problems of quantization, compression, and numerical stability, which, although relevant, are outside the scope of this current work.

We refer to the value $\pgrad^{(\gditer)} \in \galpha^\graddim$ as a \emph{partial gradient} at iteration $\gditer$, which is the quantized representation of $\nabla\loss(\params_\gditer, x_\gind)$, and we define $\tgrad^{(\gditer)} \defeq \sum_{\gind=1}^{\ngrad} \pgrad^{(\gditer)}$ as the \emph{full gradient}.
\begin{figure}
\centering
\resizebox{0.7\linewidth}{!}{\definecolor{honestcolor}{RGB}{76,193,46}
\definecolor{maliciouscolor}{RGB}{192,36,105}
\begin{tikzpicture}[
        node distance=1.6cm,
        master/.style={circle,minimum width=1cm,minimum height=1cm,text centered,text width=1cm,draw=black,fill=honestcolor!50},
        honest/.style={rectangle,minimum width=1.5cm,minimum height=1cm,text centered,text width=1cm,draw=black,fill=honestcolor!50},
        malicious/.style={starburst,text centered,text width=0.75cm,draw=black,fill=maliciouscolor!50}
    ]
    \node (master) [master] {Main Node};
    \node (W2) [honest, below=1cm of master] {$W_2$\\$x_2, x_3$};
    \node (W1) [honest, left=1cm of W2] {$W_1$\\$x_1, x_2$};
    \node (W3) [malicious, right=1cm of W2] {$W_3$\\$x_1, x_3$};
    \draw [stealth-] (master) edge[bend right=30] node[fill=white, anchor = east, xshift=0.2cm](l1){$\mathbf{g}_1/2 + \mathbf{g}_2$}(W1);
    \draw [stealth-] (master) edge node[fill=white, anchor=north, yshift=0.26cm](l2){$\mathbf{g}_2 - \mathbf{g}_3$}(W2);
    \draw [stealth-] (master) edge[bend left=30] node[fill=white, anchor=west, xshift=-0.35cm, yshift=-0.2cm](l3){$\mathbf{g}_1/2 + \mathbf{g}_3 + {\color{maliciouscolor}\mathbf{e}}$}(W3);
\end{tikzpicture}}
\vspace{-0.2cm}
\caption{Gradient Coding: Each worker is assigned two partial gradients to compute and transmits a linear combination to the main node. Without malicious workers, the main node obtains the full gradient from the responses of any two workers. For the considered setting ($W_3$ is malicious), no existing gradient coding scheme can recover the exact gradient correctly. Our goal is to present a scheme that allows the main node to identify the malicious worker and reconstruct the full gradient correctly from the honest workers.}
\label{example}
\end{figure}
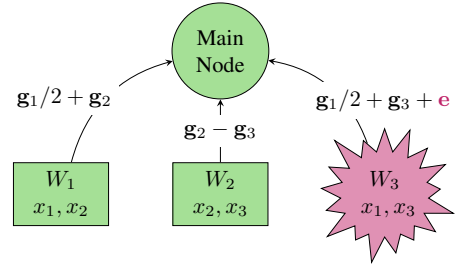
Our solution to the problem follows the intuition built from \cref{example}. We study our solution within the Byzantine-resilient gradient coding framework~\cite{hofmeisterTradingCommunication2023}, summarized below:
\begin{definition}[\BGC scheme~\cite{hofmeisterTradingCommunication2023}]
    \label{def:bgc}
    A Byzantine-resilient gradient coding scheme tolerating $\nmalicious$ malicious workers, referred to as \BGC scheme, is a tuple $\left( \allocmat, \encfunset, \decfun, \proto \right)$ consisting of
\begin{itemize}
    \item a \textbf{data assignment matrix} $\allocmat \in \{0, 1\}^{\nworker \times \ngrad}$ where $\allocmat[{\wind,\gind}]$ is equal to $1$ if the $\gind$-th data sample is given to the $\wind$-th worker and $0$ otherwise,
    \item a set of \textbf{encoding functions} used by the workers
    \mbox{$\encfunset \defeq \left\{
        \encfun[\encind] \colon \galpha^{\graddim \times \ngrad} \to \galpha^{\respdim_{\encind,1}} \times \cdots \times \galpha^{\respdim_{\encind,\nworker}}
        \mid \encind \in \range{\nencfun}
    \right\}$},
    \item a \textbf{multi-round protocol} $\proto = (\proto_1,\proto_2)$ in which $\proto_1$ selects the indices of the encoding functions to be used by the workers based on all previous responses at the start of each iteration, and $\proto_2$ selects gradients to be locally computed at the main node based on all previous responses at the end of each iteration,
    \item and a \textbf{decoding function} $\decfun$ used by the main node after running the protocol $\proto$ that outputs the correct full gradient whenever there are at most $s$ malicious workers.
\end{itemize}
\end{definition}

We suppose that each worker transmits an \emph{initial response}, which is a codeword symbol of an erasure-correcting gradient code~\cite{tandon}, before the protocol starts.
In each round $\roundind = 1,\dots,\nround$ of the interactive protocol, worker $\wind$ computes its responses based on the partial gradients assigned according to the assignment matrix. A response can be written as
$
    \iresp_{\wind} \defeq \left[ \encfun[\encind] \left( \G \right) \right]_\wind \in \galpha^{\respdim_{\encind,\wind}}
$
when $\encfun[\encind]$ is asked for, where we define $\G = \left(\pgrad[1],\pgrad[2],\dots,\pgrad[\ngrad]\right) \in \galpha^{\graddim \times \ngrad}$.
The main node receives
$\noisyiresp_{\wind} = \iresp_{\wind} + \err_{\wind}$,
where $\err_{\wind}=\zero$ for honest workers, and after which it may compute some partial gradients locally.

\begin{definition}[Figures of merit~\cite{hofmeisterTradingCommunication2023}]
    We evaluate an \BGC scheme by the following measures:
    \begin{itemize}[itemsep = 0em]
        \item
        The number of \textbf{local computations} $\localcomp$ of partial gradients at the main node.
        \item
          The \textbf{replication factor} $\replfact$, which is the average number of workers to which each sample is assigned, i.e.,
        \begin{equation*}
        \replfact \defeq \frac{\sum_{\wind \in \range{\nworker},\gind \in \range{\ngrad}} \allocmat[{\wind, \gind}]}{\ngrad}.
        \end{equation*}
        \item
        The \textbf{communication overhead}\footnote{We slightly adapt the definition of the communication overhead from~\cite{hofmeisterTradingCommunication2023} by taking the gradients and consequently the symbols we count to be from a (potentially infinite) field rather than a finite alphabet.} $\commoh$, which is the maximum number of symbols from $\galpha$ transmitted from the workers to the main node during $\proto$, i.e.,
        \begin{equation*}
            \commoh \defeq \sum_{\roundind \in \range{\nround}, \wind \in \range{\nworker}} \respdim_{\encind_\roundind,\wind}.
        \end{equation*}
    \end{itemize}
\end{definition}
We restrict this work to \emph{regular data assignments} $\assignmentset$, where each partial gradient is replicated the same number of times and each worker is allocated at least one partial gradient to compute. Formally, we have
\mbox{\(
    \assignmentset = \left\{ \allocmat \in \{0, 1\}^{\nworker \times \ngrad} \mid \sum_{\wind=1}^{\nworker} \allocmat[{\wind, \gind}] = \replfact, \sum_{\gind=1}^{\ngrad} \allocmat[{\wind, \gind}] \geq 1  \right\}
\)}. Since we focus on worst-case guarantees, this does not limit the scope of this paper as we can suppose an adversary introduces errors in partial gradients with the smallest replication. 

\section{Scheme Construction for Arbitrary Allocations}
\label{sec:scheme}
In this section, we present an \BGC scheme that is capable of identifying malicious workers given a replication factor $\replfact=\nmalicious+\nhonest$ for an integer $\nhonest \geq 1$ and an assignment matrix $\allocmat \in \assignmentset$.
For notational convenience, we define $\rth = \nworker - (\nmalicious + \nhonest)$. 
The scheme works by successively decreasing the number of non-eliminated malicious workers $\nmalicious_\roundind$ until the full gradient $\tgrad$ can either be decoded through an error correcting code (ECC) or there are no more disagreements about the value of $\tgrad$ among the workers.
The steps of the scheme are outlined in \cref{flowchart}. 
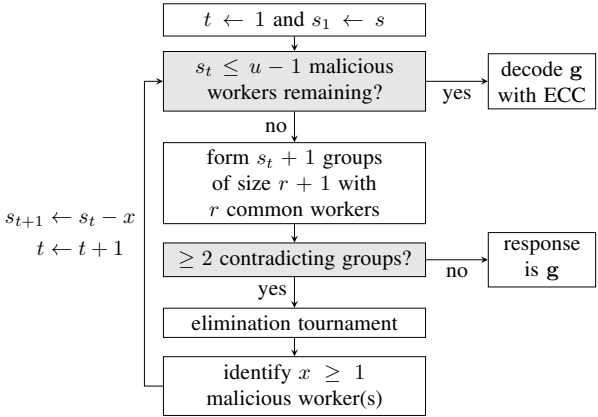
\begin{figure}[h!]
  \centering
  \resizebox{0.9\linewidth}{!}{\begin{tikzpicture}[
    node distance=1.3cm,
    flowchart/.style={rectangle,text centered,text width=4cm,draw=black},
    flowchartdecision/.style={rectangle,fill=black!10!white,text centered,text width=4cm,draw=black},
    flowchart2/.style={rectangle,text centered,text width=1.5cm,draw=black}
]
    \node (n1) [flowchartdecision] {$\nmalicious_\roundind \leq \nhonest-1$ malicious workers remaining?};
    \node (start) [flowchart, node distance=1cm, above of=n1] 
        {$\roundind \gets 1$ and $\nmalicious_1 \gets \nmalicious$};
    \node (o1) [flowchart2, right=1cm of n1] {decode $\tgrad$ with ECC};
    \node (n2)  [flowchart, below=0.5cm of n1] {form $\nmalicious_\roundind+1$ groups of size $\rth + 1$ with $\rth$ common workers};
    \node (n3)  [flowchartdecision, below=0.3cm of n2] {$\geq 2$ contradicting groups?};
    \node (o2)  [flowchart2, right=1cm of n3] {response is $\tgrad$};
    \node (n4)  [flowchart, below=0.5cm of n3] {elimination tournament};
    \node (n5)  [flowchart, below=0.3cm of n4] {identify $x \geq 1$ malicious worker(s)};
    \draw [stealth-] (n5) -- node{}(n4);
    \draw [stealth-] (n4) -- node[anchor=east]{yes}(n3);
    \draw [stealth-] (n3) -- node{}(n2);
    \draw [stealth-] (n2) -- node[anchor=east]{no}(n1);
    \draw [stealth-] (o1) -- node[anchor=north]{yes}(n1);
    \draw [stealth-] (o2) -- node[anchor=north]{no}(n3);
    \draw [stealth-] (n1) -- node{}(start);
    \draw [stealth-] 
      (n1.west) -- +(-0.3,0) |- node[pos=0.25, xshift=-1.2cm] {$\begin{aligned} s_{t+1} &\gets s_{t}-x \\ t &\gets t+1 \end{aligned}$} (n5);
  \end{tikzpicture}}
  \caption{Flowchart illustrating the steps of the interactive protocol.}
  \label{flowchart}
\end{figure}

We explain the scheme for the setting in \cref{example} before we give the general description.
In \cref{example}, $\nworker = 3$, $\nmalicious = 1$, $\replfact = 2$, and $\nhonest = 1$. The algorithm starts with $\nmalicious_\roundind = \nmalicious > \nhonest - 1$, i.e., there is one unidentified malicious worker, and the correct gradient cannot be directly decoded.
Initially, each worker sends a linear combination of the assigned partial gradients. The main node forms $\nmalicious_\roundind + 1 = 2$ groups of size $\rth+1 =2$ workers with $\rth = 1$ worker in common, i.e., $\groupset[1]=\{W_1,W_3\}$ and $\groupset[2]=\{W_2,W_3\}$. The decoded full gradient from $\groupset[1]$ is $\sum_{i=1}^{3}\mathbf{g}_i + \mathbf{e}$ and from $\groupset[2]$ is $\sum_{i=1}^{3}\mathbf{g}_i + 2\mathbf{e}$, which are the same if and only if $\mathbf{e} = \zero$. If the decoded full gradient from both groups differs, the main node applies a binary search to find a contradicting partial gradient value as follows.
The main node asks each worker to encode $\sum_{i=1}^{2}\mathbf{g}_i$ so that it can compare the decoded value from $\groupset[1]$ and $\groupset[2]$ again. In this case, each worker sends the part of their linear combination corresponding to $\mathbf{g}_1$ and $\mathbf{g}_2$. Thus, $W_1$ sends $\mathbf{g}_1/2 + \mathbf{g}_2$ and $W_2$ sends $\mathbf{g}_2$. Suppose $W_3$ sends $\mathbf{g}_1/2 + \mathbf{e}$. The group responses are then decoded
as $\sum_{i=1}^{2}\mathbf{g}_i + \mathbf{e}$ for $\groupset[1]$ and as $\sum_{i=1}^{2}\mathbf{g}_i + 2\mathbf{e}$ for $\groupset[2]$. Observe that the response from $W_3$ is weighted differently in the decoding steps from $\groupset[1]$ and $\groupset[2]$ respectively; therefore, the adversary cannot induce the same error in both groups. Similarly, the main node asks the groups to encode $\mathbf{g}_2$.
Suppose that $W_3$ answers honestly; therefore, both groups yield the same value.
Since each group has already committed to a value for $\sum_{i=1}^{2}\mathbf{g}_i$ before, the main node can deduce the claimed value for $\mathbf{g}_1$ as $\mathbf{g}_1 + \mathbf{e}$ for $\groupset[1]$ and $\mathbf{g}_1 + 2\mathbf{e}$ for $\groupset[2]$. As only $W_3$ was assigned $\mathbf{g}_1$ in $\groupset[2]$, this means that $W_3$ claims $\mathbf{g}_1 + 2\mathbf{e}$ as value for $\mathbf{g}_1$.
Hence, by locally computing $\mathbf{g}_1$ the main node identifies $W_3$ as malicious and decodes the full gradient correctly.

We summarize the capabilities of our scheme in the following theorem. In summary, our scheme generalizes ideas used in~\cite{draco} for a cyclic repetition data assignment and $\nhonest=\nmalicious+1$, to any regular data assignments with $\replfact = \nmalicious + \nhonest$ and it is also applicable for $\ngrad \neq \nworker$.
\begin{theorem}
\label{theorem: schemeoverview}
    The scheme constructed below is an \BGC scheme with a parameter $\nhonest$, $1 \leq \nhonest \leq \nmalicious + 1$, and requires $\localcomp \leq \nmalicious + 1 - \nhonest$ local computations, a replication $\replfact = \nmalicious + \nhonest$ and a communication overhead $\commoh \leq (\rth+2)(\nmalicious + 1 - \nhonest)\lceil \log_2(\ngrad)\rceil $ for any data assignment $\allocmat \in \assignmentset$.
\end{theorem}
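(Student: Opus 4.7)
The plan is to analyze the iterative scheme depicted in \cref{flowchart} by establishing, in order: (a) feasibility of the group construction at every round, (b) correctness of each round (both when all groups agree and when some disagree), and (c) the three claimed resource bounds.

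I would first argue feasibility. Whenever $\nmalicious_\roundind \geq \nhonest$, the pool of non-eliminated workers has size $\nworker - (\nmalicious - \nmalicious_\roundind) = \rth + \nhonest + \nmalicious_\roundind$, which exceeds $\rth + (\nmalicious_\roundind + 1)$, i.e., the number of distinct workers required to form $\nmalicious_\roundind + 1$ groups of size $\rth + 1$ sharing $\rth$ common members. Since $\allocmat \in \assignmentset$ has replication $\replfact = \nmalicious + \nhonest$ and each group contains $\rth + 1 = \nworker - \replfact + 1$ workers, a simple pigeonhole argument on the columns of $\allocmat$ shows that every partial gradient is covered by at least one worker of each group; hence each group can in principle reconstruct $\tgrad$ from its members' initial responses.

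Next I would establish per-round correctness in two parts. (i) If all $\nmalicious_\roundind + 1$ groups decode the same value, it must equal $\tgrad$: because any two groups apply different decoding coefficients to the responses of the common workers (the distinguishing members differ, forcing a compensating re-weighting on the shared ones), an adversary controlling at most $\nmalicious_\roundind$ workers cannot inject an error pattern that appears identically in all $\nmalicious_\roundind + 1$ group-level decoded gradients. (ii) If two groups disagree, a binary search over the coordinates of $\tgrad$---implemented by asking the workers to encode prefix-sums of their assigned partial gradients, exactly as in \cref{example}---localizes a single index $\indcheck$ on which the two groups commit to distinct values. A single local evaluation of $\pgrad[\indcheck]$ at the main node then exposes at least one group whose claim is wrong, and cross-referencing the individual workers' encodings within that group identifies at least one previously unidentified malicious worker. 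Each elimination round therefore strictly decreases $\nmalicious_\roundind$, so after at most $\nmalicious + 1 - \nhonest$ rounds the condition $\nmalicious_\roundind \leq \nhonest - 1$ is met and a standard erasure-correcting gradient code on the initial responses recovers $\tgrad$. This gives $\localcomp \leq \nmalicious + 1 - \nhonest$, the replication bound is immediate from the definition of $\assignmentset$, and the communication bound follows because each round consists of $\lceil \log_2 \ngrad \rceil$ binary-search refinements in which only the $\rth + 2$ workers distinguishing the two contradicting groups (the $\rth$ common members plus one unique worker from each) transmit a single field symbol.

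The main obstacle will be making item (i) rigorous for an arbitrary regular assignment. Unlike the fractional-repetition case of~\cite{hofmeisterTradingCommunication2023}, where workers in a repetition group hold identical datasets and pair-wise equality checks suffice, here the overlapping groups impose $\nmalicious_\roundind + 1$ related but non-identical linear constraints on the error vector. I expect the argument to reduce to an algebraic rank condition on the family of group-wise decoding matrices, showing that $\nmalicious_\roundind$ adversarial errors cannot simultaneously satisfy all $\nmalicious_\roundind + 1$ constraints unless they are zero; this is also the place where the optimality of the grouping strategy advertised in the introduction is expected to enter.
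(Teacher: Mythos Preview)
Your outline matches the paper's proof almost step for step: feasibility of the grouping, correctness via \cref{lemma:achievable-group-number} (your item~(i)) and \cref{rem: partialgradient} (your item~(ii)), termination after at most $\nmalicious+1-\nhonest$ rounds, and the communication count $(\rth+2)\lceil\log_2\ngrad\rceil$ per round from the $\rth+2$ workers in $\groupset[{\groupind_1}]\cup\groupset[{\groupind_2}]$. You have also correctly located the real work in item~(i); the paper discharges that rank condition by deriving $\combvec$ in closed form from the Vandermonde structure of $\MDS$ and reducing the consistency of $\E_{\cdot,\mathcal{S}}\B_{\mathcal{S},\cdot}=\lambda\one$ to the nonvanishing of a Cauchy-like determinant.

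There is one genuine slip. In the final step you write that ``a standard erasure-correcting gradient code on the initial responses recovers $\tgrad$.'' For $\nhonest>1$ this is not enough: when the loop exits with $\nmalicious_\roundind\leq \nhonest-1$, up to $\nhonest-1$ malicious workers are still \emph{unidentified}, so their contributions are errors, not erasures. The paper's \cref{cor:apply_DRACO} handles this by observing that, after puncturing the columns of the identified workers, the remaining code generated by $\MDS_{\cdot,[\nworker]\setminus\bar{\mathcal S}}$ is an $(\nworker-\nmalicious+(\nhonest-1),\,\rth+1)$ MDS code with minimum distance $2(\nhonest-1)$, which \emph{error}-corrects the surviving $\nhonest-1$ adversarial responses. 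This is precisely the place where the extra redundancy $\nhonest>1$ is spent, so the distinction matters; simply replace your erasure-decoding sentence with an appeal to the error-correction radius of the punctured MDS code.
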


We now elaborate on the protocol in \cref{flowchart}. A key idea is the construction of a set of encoding and decoding matrices ${\W}^{(\mathbf{a})} \in \galpha^{\ngrad \times \nworker}$ and $\B^{(\roundind)} \in \galpha^{\nworker \times \ngroup}$ that allow the main node to obtain any desired linear combination of the partial gradients from the workers' computations. We explain the construction of the encoding and decoding matrices in \cref{sec: encoding-matrix}. We then give a detailed description of the individual components of our scheme.

\subsection{Construction of the Encoding and Decoding Matrices}
\label{sec: encoding-matrix}

We construct a set of encoding and decoding matrices, ${\W}^{(\mathbf{a})}$ and $\B^{(\roundind)}$, of a gradient code such that the main node can recover \emph{any sum of partial gradients} from the responses of any group of workers $\groupset[]$ of size $\rth+1$. The workers can only compute assigned partial gradients according to $\allocmat$. 

Let $\MDS \in \galpha^{(\rth+1) \times \nworker}$ be the generator matrix of an $(\nworker,\rth+1)$ MDS code. We choose a Reed-Solomon code; that is, $\MDS$ is a Vandermonde matrix with distinct evaluation points. Note that this construction exists whether $\galpha$ is a finite field, $\R$, or $\C$. For more details on a construction over $\C$, see~\cite{draco}.
Let $\mathcal{I}_\gind = \{\wind \mid \allocmat[{\wind, \gind}] = 0\}$ be the set of row indices of zero entries in $\allocmat[{\cdot,\gind}]$, that is, the set of workers that are not assigned $\pgrad$. Let $\mathbf{a}=(a_1,\dots,a_\ngrad)^\tp \in \galpha^\ngrad$ denote the coefficients of a desired linear combination of the partial gradients $\sum_{i=1}^{\ngrad} a_i \tgrad_i$. Let $(\cdot|\cdot)$ be the concatenation operator, we define ${\W}^{(\mathbf{a})} = \left( {\mathbf{Q}}^{(\mathbf{a})} \mid \mathbf{a} \right) \MDS$ where ${\mathbf{Q}}^{(\mathbf{a})}=({\mathbf{q}}^{(\mathbf{a})}_1,\dots,{\mathbf{q}}^{(\mathbf{a})}_\ngrad)^\tp \in \galpha^{\ngrad \times \rth}$ is the solutions of the system of equations
\begin{equation}
    \label{eq:q}
    \zero = \left( {\mathbf{q}^{(\mathbf{a})}_\gind}^\tp \mid a_\gind \right) \MDS_{\cdot, \mathcal{I}_\gind}, \quad \forall i \in [\ngrad].
\end{equation}
Since $\MDS$ generates an MDS code of dimension $\rth+1$ and $\card{\mathcal{I}_\gind} = \rth$, we can obtain $\ngrad$ vectors $\mathbf{q}^{(\mathbf{a})}_\gind \in \galpha^\rth$ uniquely. 
By this, we establish a result analogous to~\cite[Lemma 3]{draco} showing that ${\W}^{(\mathbf{a})}$ satisfies the desired properties. 
\begin{lemma}
    \label{lemma:encoding-matrix}
    For all $\wind,\gind$, if $\allocmat[{\wind, \gind}] = 0$, then  ${\W}_{\gind, \wind}^{(\mathbf{a})} = 0$. Furthermore, for any index set $\groupset[]$ such that $\card{\groupset[]} \geq \rth + 1$, the column span of $\W_{\cdot, \groupset[]}^{(\mathbf{a})}$ contains the vector $\mathbf{a}$.
\end{lemma}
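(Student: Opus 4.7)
The plan is to unpack the definition $\W^{(\mathbf{a})} = \left( \mathbf{Q}^{(\mathbf{a})} \mid \mathbf{a} \right) \MDS$ and invoke the MDS property of $\MDS$ twice. I would first observe that the $(\gind, \wind)$-entry of $\W^{(\mathbf{a})}$ is the row-column product $\left( \mathbf{q}^{(\mathbf{a})}_\gind{}^\tp \mid a_\gind \right) \MDS_{\cdot, \wind}$. For the first claim, I would note that $\allocmat[{\wind, \gind}] = 0$ is, by the definition of $\mathcal{I}_\gind$, equivalent to $\wind \in \mathcal{I}_\gind$; reading off the defining equation~\eqref{eq:q} for $\mathbf{q}^{(\mathbf{a})}_\gind$ at the column of $\MDS_{\cdot, \mathcal{I}_\gind}$ indexed by $\wind$ then delivers $\W^{(\mathbf{a})}_{\gind, \wind} = 0$ immediately.

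For the second claim, I would use that any $\rth+1$ columns of $\MDS$ are linearly independent by the MDS property, so for any $\mathcal{G}$ with $\card{\mathcal{G}} \geq \rth+1$ the submatrix $\MDS_{\cdot, \mathcal{G}}$ has full row rank $\rth+1$. Consequently, the system $\MDS_{\cdot, \mathcal{G}} \mathbf{v} = \mathbf{e}_{\rth+1}$ is solvable for some $\mathbf{v} \in \galpha^{\card{\mathcal{G}}}$, where $\mathbf{e}_{\rth+1}$ denotes the last standard basis vector of $\galpha^{\rth+1}$. Right-multiplying both sides of $\W^{(\mathbf{a})}_{\cdot, \mathcal{G}} = \left( \mathbf{Q}^{(\mathbf{a})} \mid \mathbf{a} \right) \MDS_{\cdot, \mathcal{G}}$ by $\mathbf{v}$ picks out the last column of $\left( \mathbf{Q}^{(\mathbf{a})} \mid \mathbf{a} \right)$, yielding $\W^{(\mathbf{a})}_{\cdot, \mathcal{G}} \mathbf{v} = \mathbf{a}$ and hence placing $\mathbf{a}$ in the column span of $\W^{(\mathbf{a})}_{\cdot, \mathcal{G}}$.

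I do not anticipate a genuine obstacle here: once $\mathbf{Q}^{(\mathbf{a})}$ is available from~\eqref{eq:q}, both parts reduce to direct manipulations of the block factorization of $\W^{(\mathbf{a})}$. The only point that merits an extra sentence, should the reader want it spelled out, is that~\eqref{eq:q} indeed has a unique solution $\mathbf{q}^{(\mathbf{a})}_\gind$: since $\card{\mathcal{I}_\gind} = \rth$, the $(\rth+1)\times\rth$ matrix $\MDS_{\cdot, \mathcal{I}_\gind}$ has a one-dimensional left null space, and the invertibility of any $\rth+1$ columns of $\MDS$ forces the last coordinate of any generator of that kernel to be nonzero, so pinning that coordinate to $a_\gind$ uniquely determines the remaining $\rth$ entries---this is the analogue of the argument behind~\cite[Lemma~3]{draco}.
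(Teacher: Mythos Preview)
Your proposal is correct and follows essentially the same approach as the paper: both parts unpack the factorization $\W^{(\mathbf{a})} = (\mathbf{Q}^{(\mathbf{a})} \mid \mathbf{a})\MDS$, use the defining relation~\eqref{eq:q} for the zero-entry claim, and then right-multiply by a vector $\mathbf{v}$ (the paper writes it explicitly as $\MDS_{\cdot,\mathcal{G}}^{-1}(0,\dots,0,1)^\tp$ for $|\mathcal{G}|=\rth+1$) to isolate $\mathbf{a}$. Your full-row-rank phrasing handles $|\mathcal{G}|\geq \rth+1$ in one stroke rather than first restricting to a size-$(\rth+1)$ subset, but this is a cosmetic difference.
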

\begin{proof}
    Suppose $\allocmat[{\wind, \gind}] = 0$ for some $\wind,\gind$. Then $\wind \in \mathcal{I}_\gind$, and by construction ${\W}_{\gind,\wind}^{(\mathbf{a})} = \left( {\mathbf{q}^{(\mathbf{a})}_\gind}^\tp \mid a_i \right) \MDS_{\cdot, \wind} = 0$.
    For any index set $\groupset$ with $\card{\groupset} = \rth+1$, because of the Vandermonde structure of $\MDS$ we can show that there exists a vector
    \begin{equation}
        \label{eq:combvec}
        \combvec = \MDS_{\cdot, {\groupset}}^{-1} \left(0, \dots, 0, 1\right)^\tp,
    \end{equation}
    and
    \(
        {\W}_{\cdot,\groupset}^{(\mathbf{a})} \combvec = \left( {\mathbf{Q}}^{(\mathbf{a})} \mid \mathbf{a} \right) \MDS_{\cdot,\groupset} \MDS_{\cdot, {\groupset}}^{-1} \left(0, \dots, 0, 1\right)^\tp = \mathbf{a}.
    \)
\end{proof}
A closed form expression for $\combvec[{\groupset[]}]$ when $\MDS$ generates a Reed-Solomon code is given in \cref{sec: b-closedform}.
In round $\roundind$, the main node constructs the decoding matrix \mbox{$\B^{(\roundind)} \defeq \left(\combvec[1], \combvec[2], \dots, \combvec[\ngroup] \right) \in \galpha^{\nworker \times \ngroup}$} for a choice of $\ngroup$ different groups $\groupset$, $\groupind \in [\ngroup]$. The decoding matrix is independent of $\mathbf{a}$.
This matrix contains the combining coefficients for all groups as columns and satisfies ${\W}^{(\mathbf{a})}\B^{(\roundind)} = [\mathbf{a}, \dots, \mathbf{a}] \in \galpha^{\ngrad \times \ngroup}$.

\subsection{Workers' Response and Contradicting Groups}

\emph{Workers' response: }When the main node wishes to compute $\G\mathbf{a}$, each worker responds with a vector
\mbox{$\iresp_{\wind}^{(\mathbf{a})} = \G \W^{(\mathbf{a})}_{\cdot,\wind} \in \galpha^{\graddim}$}. 
We define the workers' responses by the matrix \mbox{$\Z^{(\mathbf{a})} = ( \iresp_{\wind}^{(\mathbf{a})} ) = \G{\W}^{(\mathbf{a})} \in \galpha^{\graddim \times \nworker}$}
and accordingly write \mbox{$\noisyZ^{(\mathbf{a})}=\Z^{(\mathbf{a})} + \E$}, where the error matrix $\E \defeq (\err_{\wind} ) \in \galpha^{\graddim \times \nworker}$ includes the error terms imposed by each worker as columns.

We define the encoding matrix used in the initial response as $\W = \W^{(\one)}$
and let $\Z = \Z^{(\one)}$ be the {initial} responses of the workers dictated by $\W$. From~\cref{lemma:encoding-matrix}, the full gradient can then be decoded from any group of $\rth + 1$ workers\footnote{We chose $\rth = \nworker - \replfact = \nworker - (\nmalicious+\nhonest)$ since if $\rth < \nworker - \replfact$, then there exists a group selection which does not include any workers which have computed $\pgrad$ for some $\gind$ and the full gradient cannot be recovered.}. In our protocol, we only require the reconstruction of either the full gradient or the sum of a subset of partial gradients, i.e., $\mathbf{a} \in \{0,1\}^\ngrad$, from any group $\groupset$ of workers. 
\begin{remark}
    \label{rem: encoding-matrix}
    For $\mathbf{a} \in \{0,1\}^\ngrad$, $\W^{(\mathbf{a})}$ differs from $\W$ only by setting certain rows of $\W$ to be zero rows, corresponding to the partial gradients which are not asked for. Thus, each worker $\wind$ at any stage of the protocol transmits $\G{\W}_{\cdot,\wind}^{(\mathbf{a})}$, which as a linear combination of \pgrad differs from $\G{\W}_{\cdot,\wind}$ only by setting some coefficients of the linear combination to $0$.
\end{remark}

\emph{Contradicting groups: }Let $\nmalicious_t$ be the number of potential malicious workers present at round $\roundind$ with $\nmalicious_1 = \nmalicious$. To decode the correct full gradient and detect the presence of malicious workers, the main node creates $m = \nmalicious_\roundind+1$ groups of workers, each of cardinality $\rth+1$ and that all have $\rth$ workers in common. An exhaustive search over all $\binom{\nworker}{\rth+1}$ groups is computationally prohibitive. We show that only $\nmalicious_\roundind+1$ groups constructed using our strategy are enough and show later that any choice of less than $\nmalicious_\roundind+1$ groups is insufficient.

Let $\wind_1\dots,\wind_{\rth+\nmalicious_\roundind+1}$ be any indexing of $\rth+\nmalicious_\roundind+1$ non-eliminated workers. Then we define the groups as
\begin{equation}
    \label{eq:grouping_scheme}
    \groupset = \{ \wind_1,\dots,\wind_\rth \} \cup \{ \wind_{\rth+\groupind} \},
    \quad
    \groupind = 1,\dots,\nmalicious_\roundind+1.
\end{equation}

From each group $\groupind \in \range{\nmalicious_\roundind+1}$, the main node computes the claimed value of the desired linear combination of the partial gradients denoted by a group's response $\gestim^{(\groupind,\mathbf{a})}$, i.e.,
\begin{equation}
    \label{eq:decode-group-value}
    \gestim^{(\groupind,\mathbf{a})} \defeq  \noisyZ^{(\mathbf{a})} \combvec = \G \mathbf{a} + \E \combvec.
\end{equation}
Groups that yield different values of $\gestim^{(\groupind,\mathbf{a})}$ are called \emph{contradicting groups}.
If there are no contradicting groups, \cref{lemma:achievable-group-number} shows that $\gestim^{(\groupind,\mathbf{a})}$ is the desired linear combination of the partial gradients. In particular, if there were no contradicting groups at round $\roundind = 1$, the main node decodes the full gradient correctly.

\begin{lemma}
    \label{lemma:achievable-group-number}
    Let $\nmalicious_\roundind$ be the number of unidentified malicious workers and consider $\nmalicious_\roundind+1$ groups constructed as in \cref{eq:grouping_scheme}. If all $\nmalicious_\roundind+1$ groups yield the same response, it must be correct.
    \end{lemma}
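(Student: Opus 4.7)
The plan is to prove the lemma by contradiction. Assume all $\nmalicious_\roundind+1$ groups agree on a common response $\mathbf{v}$ but $\mathbf{v} \neq \G\mathbf{a}$. Fix a gradient coordinate on which the discrepancy $d \in \galpha$ is nonzero and work componentwise. By~\eqref{eq:decode-group-value}, the scalar identity $d = \sum_{\wind \in \groupset \cap \mathcal{M}} (\combvec)_{\wind}\, e_\wind$ must hold for every $\groupind \in \range{\nmalicious_\roundind+1}$, where $\mathcal{M}$ denotes the set of (at most $\nmalicious_\roundind$) malicious workers and $e_\wind$ is the scalar error injected by worker $\wind$.

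The first step is a counting argument on the group structure in~\eqref{eq:grouping_scheme}. Let $M_1$ be the number of malicious workers among the $\rth$ common workers $\wind_1,\dots,\wind_\rth$ and let $M_2$ be the number of malicious workers among the $\nmalicious_\roundind+1$ non-common workers $\wind_{\rth+1},\dots,\wind_{\rth+\nmalicious_\roundind+1}$. Because $M_1+M_2 \leq \nmalicious_\roundind$, the set $\mathcal{K}_h$ of group indices whose non-common worker is honest satisfies $|\mathcal{K}_h| \geq (\nmalicious_\roundind+1)-M_2 \geq M_1+1$. For each such $\groupind \in \mathcal{K}_h$, the non-common error term vanishes and the system reduces to $d = \sum_{\wind \in \{\wind_1,\dots,\wind_\rth\}\cap \mathcal{M}} (\combvec)_\wind\, e_\wind$, an overdetermined system in only $M_1$ unknowns $\{e_\wind\}$.

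The second step exploits the Reed--Solomon structure of $\MDS$. Using the closed-form Lagrange coefficients (cf.~\cref{sec: b-closedform}), for each common $\wind$ the weight $(\combvec)_\wind$ factors as $1/[C_\wind(\alpha_\wind - \alpha_{\wind_{\rth+\groupind}})]$, with $C_\wind$ depending only on $\wind$. After rescaling $\tilde{e}_\wind \defeq e_\wind/C_\wind$ and clearing the common denominator $\prod_{\wind \in \{\wind_1,\dots,\wind_\rth\}\cap \mathcal{M}}(\alpha_\wind - x)$, the $|\mathcal{K}_h|$ identities collapse to a single polynomial identity in $x = \alpha_{\wind_{\rth+\groupind}}$ whose left-hand side has degree exactly $M_1$ with leading coefficient $\pm d \neq 0$, while the right-hand side has degree at most $M_1-1$. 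Since this identity must hold at the $|\mathcal{K}_h| \geq M_1+1$ distinct evaluation points $\alpha_{\wind_{\rth+\groupind}}$, a nonzero polynomial of degree $M_1$ would have more roots than its degree---a contradiction, forcing $d=0$.

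The crux is the second step: the polynomial-degree contradiction leverages the explicit Cauchy/Lagrange form of $\combvec$, available precisely because $\MDS$ is chosen as a Vandermonde generator. A purely MDS argument would instead need to combine the minimum weight $\rth+2$ of the dual code on the $\rth+\nmalicious_\roundind+1$ relevant coordinates with the sparsity $\leq \nmalicious_\roundind$ of $\E$ restricted to those coordinates---feasible but less direct than the Lagrange calculation above.
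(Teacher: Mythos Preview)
Your proof is correct and shares the paper's high-level structure: both arguments split the malicious workers into those inside the common ``root'' set $\{\wind_1,\dots,\wind_\rth\}$ and those outside, observe that the groups whose non-common worker is honest already give an overdetermined system in the $M_1$ root-malicious error variables, and then exploit the explicit Lagrange/Cauchy form of the coefficients $(\combvec)_\wind = \bigl[\prod_{\ell\in\groupset\setminus\{\wind\}}(\alpha_\wind-\alpha_\ell)\bigr]^{-1}$ coming from the Vandermonde choice of $\MDS$.

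Where the two proofs diverge is in the endgame. The paper packages the $M_1+1$ equations into the augmented matrix $(\B_{\mset,\cdot}^\tp\mid\one)$, isolates the lower-right $(M_1+1)\times(M_1+1)$ block, and shows by explicit row/column reduction that the resulting Cauchy-like determinant is nonzero (Appendix~\ref{sec: cauchy}). You instead clear denominators to obtain a single polynomial identity in $x=\alpha_{\wind_{\rth+\groupind}}$: the left side has degree exactly $M_1$ with leading coefficient $\pm d\neq 0$, the right side has degree $\leq M_1-1$, yet the identity must hold at $|\mathcal{K}_h|\geq M_1+1$ distinct evaluation points---an immediate contradiction by root counting. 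This is a genuinely cleaner route: it replaces the somewhat involved determinant manipulation with a two-line degree argument, and it makes transparent why exactly $M_1+1$ honest-non-common groups are needed. The paper's approach, on the other hand, makes the linear-algebraic inconsistency (pivot in the augmented column) more explicit and yields the nonvanishing of the Cauchy-like determinant as a standalone byproduct. A minor quibble: you reuse the symbol $d$ for the scalar discrepancy, which collides with the paper's $\graddim$; consider renaming it.
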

\begin{proof}
We only outline the proof idea here and defer the detailed proof to Appendix \ref{sec : lemma3-proof}.

Let $\E$ be the error matrix, and $\B^{(\roundind)}$ be the decoding matrix for the chosen $\nmalicious_\roundind+1$ groups. We show that the linear system of equations $\E_{\cdot,\mathcal{S}} \B_{\mathcal{S},\cdot}^{(\roundind)} = \one_{d \times (\nmalicious_\roundind + 1)}$ has no solution, with $\mathcal{S} \subset \range{\nworker}$ being the indices
of the malicious workers and $\card{\mathcal{S}} = \nmalicious_\roundind$. This would imply that for any arbitrary set of up to $\nmalicious_\roundind$ malicious workers, the adversary cannot corrupt their responses (by adding error vectors to their outputs) such that it induces each group to return the same incorrect full gradient upon being decoded by the main node. We derive a closed-form expression for $\B^{(\roundind)}$ from \cref{eq:combvec} by using the known inverse of a Vandermonde matrix~\cite{Rawashdeh2019ASM}, thus showing no solution exists.
Hence, choosing any $\nmalicious_\roundind+1$ groups, it is guaranteed that they cannot yield the same claimed full gradient value for any non-zero error matrix $\E_{\cdot,\mathcal{S}}$.
\end{proof}

If there are at least two contradicting groups, the main node needs to identify one group without malicious workers to decode the correct full gradient from. To detect which workers are malicious, the main node chooses two contradicting groups and plays the elimination tournament (explained next) to detect at least one malicious worker.

Afterward, the main node decreases $\nmalicious_\roundind$ accordingly and proceeds to the next round $\roundind+1$. At this round, the main node forms $\nmalicious_{\roundind+1}+1$ groups and repeats this process until only $\nhonest-1$ malicious workers remain undetected. The main node can then decode the full gradient correctly using the error-correction capability of the designed encoding and decoding matrices, cf.~\cref{cor:apply_DRACO}. Limiting the number of required rounds to $\nmalicious-(\nhonest-1)$ shows how the additional redundancy $\nhonest > 1$ can be leveraged to reduce the communication overhead.

\begin{lemma}
    \label{cor:apply_DRACO}
    If and only if there are at most $\nhonest-1$ unidentified malicious workers in round $\roundind$, then the main node can decode the full gradient without any additional local gradient computations or communication from the initial responses.
    \begin{proof}
        \newcommand{\unident}{\ensuremath{{\bar{\mathcal{S}}}}}
        Note that effectively, the main node tries to decode the information word $(\mathbf{c}_1,\dots,\mathbf{c}_{\rth+1}) = \G \left( \mathbf{Q}^{(\one)} \mid \one \right)$ encoded as $(\iresp_{1},\dots,\iresp_{\nworker})$ by the MDS code generated by $\MDS$. Let $\unident \subset \range{\nworker}$ be the indices of already identified malicious workers in round~$\roundind$. If there are only $\nhonest-1$ unidentified malicious workers, i.e., $\card{\unident}=\nmalicious-(\nhonest-1)$, then the messages from the remaining workers $\Z_{\cdot, \range{\nworker} \setminus \unident}$ can be viewed as a codeword from an MDS code generated by the matrix $\MDS_{\cdot, \range{\nworker} \setminus \unident}$. This is equivalent to puncturing the original code generated by $\MDS$, and yields an $(\nworker-\card{\unident},\rth+1) = (\nworker-\nmalicious+(\nhonest-1),\nworker-\nmalicious-(\nhonest-1)$ MDS code. This code has minimum distance $2(\nhonest-1)$, and therefore, can correct the error of the $\nhonest-1$ unidentified malicious workers.
        The converse follows from the symmetrization attack described in~\cite{draco}.
    \end{proof}
\end{lemma}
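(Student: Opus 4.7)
The plan is to recognize that the workers' initial responses form a codeword of a standard MDS code and then reduce the claim to the classical error-correction capability of punctured Reed--Solomon codes. Recall from \Cref{sec: encoding-matrix} that the initial encoding matrix is $\W = \W^{(\one)} = (\mathbf{Q}^{(\one)} \mid \one)\MDS$, so the stacked initial responses $(\iresp_1,\dots,\iresp_\nworker)$ are exactly the codeword obtained by applying the $(\nworker,\rth+1)$ MDS code generated by $\MDS$ to the information matrix $\G(\mathbf{Q}^{(\one)} \mid \one)$. The last column of this information matrix equals $\G\one = \tgrad$, so decoding the codeword suffices to extract the full gradient.

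For the forward direction, suppose at most $\nhonest-1$ unidentified malicious workers remain at round $\roundind$. Denote by $\bar{\mathcal{S}}$ the set of already-identified malicious workers, so $\card{\bar{\mathcal{S}}} \geq \nmalicious-(\nhonest-1)$. The main node simply drops the columns indexed by $\bar{\mathcal{S}}$ from the initial responses — no new transmission is needed since those responses have already been received. The surviving responses form a codeword of the code generated by the submatrix $\MDS_{\cdot,\range{\nworker}\setminus\bar{\mathcal{S}}}$, which is again MDS by the Vandermonde/Reed--Solomon structure and has length $\nworker-\card{\bar{\mathcal{S}}}$ and dimension $\rth+1$. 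Its minimum distance is $\nworker-\card{\bar{\mathcal{S}}}-\rth = \nmalicious+\nhonest-\card{\bar{\mathcal{S}}} \geq 2\nhonest-1$, which is enough to correct $\nhonest-1$ worst-case errors by standard MDS decoding. This handles exactly the errors introduced by the remaining unidentified malicious workers.

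For the converse, I would argue by a symmetrization/impossibility argument: if $\nhonest$ or more malicious workers remain unidentified, then no matter which subset of received symbols the main node uses, the punctured code has minimum distance at most $2\nhonest-1$, and hence there exist two distinct full gradients whose encodings differ in at most $2\nhonest-1$ positions. The adversary, controlling $\nhonest$ of those positions, can split its corruption so that the received word is equidistant from the two candidate codewords, rendering unique decoding impossible without additional information. Since this construction is exactly the symmetrization attack given in~\cite{draco}, I would invoke that reference rather than rederive it; the only thing to check is that the attack uses only corruptions in positions assigned to malicious workers, which it does.

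The main obstacle is the converse. The forward direction is essentially a one-line observation once the MDS structure is in place, but the converse requires checking that the adversary can realize the classical MDS bound while respecting the constraint that its error vector is supported only on columns corresponding to its $\nhonest$ (or more) unidentified workers — not on arbitrary coordinates. Verifying that the symmetrization attack of~\cite{draco} transports to our setting of general regular assignments (and arbitrary $\ngrad \neq \nworker$) is where the bulk of the work lies; the rest follows from standard coding theory.
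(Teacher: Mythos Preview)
Your approach is essentially identical to the paper's: recognize that the initial responses are a codeword of the $(\nworker,\rth+1)$ MDS code generated by $\MDS$, puncture at the identified malicious positions, and invoke standard error-correction for the forward direction and the symmetrization attack of~\cite{draco} for the converse.

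One small slip to fix: with $\card{\bar{\mathcal{S}}} \geq \nmalicious-(\nhonest-1)$ you get $\nmalicious+\nhonest-\card{\bar{\mathcal{S}}} \leq 2\nhonest-1$, not $\geq$. The argument still goes through because the number of remaining errors is at most $\nmalicious-\card{\bar{\mathcal{S}}}$, and the punctured minimum distance is exactly $(\nmalicious-\card{\bar{\mathcal{S}}})+\nhonest \geq 2(\nmalicious-\card{\bar{\mathcal{S}}})+1$ precisely when $\nmalicious-\card{\bar{\mathcal{S}}} \leq \nhonest-1$; so the error-correction capability tracks the actual number of residual errors rather than being uniformly $\nhonest-1$. (Incidentally, the paper's printed proof has the analogous off-by-one, writing $2(\nhonest-1)$ where $2\nhonest-1$ is meant.)
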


\subsection{Elimination Tournament to Identify Malicious Workers}
\label{sec:elimination-tournament}

During an elimination tournament, the main node runs matches between two contradicting groups of workers, say $\groupset[{\groupind_1}]$ and $\groupset[{\groupind_2}]$, to identify at least one malicious worker introducing errors in its computation. The elimination tournament is a generalization of the elimination tournament presented in ~\cite{hofmeisterTradingCommunication2023}. The tournament aims to find \emph{one} partial gradient on whose value at least two workers disagree. This is guaranteed since the groups are contradicting, i.e., $\gestim^{(k_1,\mathbf{a})}$ and $\gestim^{(k_2,\mathbf{a})}$ have different values. By computing this partial gradient locally, the main node identifies at least one malicious worker. 

The main node constructs a full binary tree, called the \emph{match tree} labeled by partial gradients. The root node is labeled by the sum of all partial gradients. Each node has two children: the first is labeled by the sum over the first half of the parent's partial gradients, and the second by the sum over the second half. Proceeding recursively, each leaf node is labeled by an individual partial gradient. The contradicting groups claim different values for the root node. The matches walk on the tree to reach a contradicting leaf. At every match, the main node asks the workers of both groups to send a response from which the main node can decode one of two child nodes of a parent node whose label is known.
This is always possible by \cref{rem: encoding-matrix}.
Further, we do not need the workers or main node to compute a new encoding matrix from scratch at any stage of the match tree. The main node can also infer the groups' claims for the other child node. The two groups must disagree on one of these nodes. Proceeding recursively, the main node can always reach a leaf node where the groups differ. The main node computes the value of the leaf locally and detects which workers are behaving maliciously.

\begin{lemma}
    \label{rem: partialgradient}
    Given two groups of workers $\groupset[{\groupind_1}]$ and $\groupset[{\groupind_2}]$ with $\gestim^{(\groupind_1, \one)} \neq \gestim^{(\groupind_2, \one)}$ an elimination match always reaches a leaf of the match tree, for which not all workers claim the same label. Accordingly, the main node can identify at least one malicious worker after each local computation.
\end{lemma}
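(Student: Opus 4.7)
The plan is to argue by an invariant maintained along a downward traversal of the match tree: at every visited node, the two groups' committed (decoded or inferred) values for that node's label, call them $v_1$ and $v_2$, differ. The base case is immediate at the root, where the label is $\one$ and the committed values are $\gestim^{(\groupind_1,\one)}\neq\gestim^{(\groupind_2,\one)}$ by the lemma's hypothesis.

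For the inductive step, consider an internal node with label $\mathbf{a}\in\{0,1\}^\ngrad$ and $v_1\neq v_2$. Its children carry labels $\mathbf{a}_L,\mathbf{a}_R\in\{0,1\}^\ngrad$ with $\mathbf{a}_L+\mathbf{a}_R=\mathbf{a}$ by construction. From the transmitted responses (no fresh worker computation is required, per \cref{rem: encoding-matrix}), the main node decodes $v^L_j\defeq\gestim^{(\groupind_j,\mathbf{a}_L)}$ for $j=1,2$ and infers $v^R_j\defeq v_j-v^L_j$ as each group's claim for the right child. This subtraction is consistent because $\mathbf{Q}^{(\mathbf{a})}$ depends linearly on $\mathbf{a}$ via \cref{eq:q}, so $\W^{(\mathbf{a})}$ and the group decoding $\noisyZ^{(\mathbf{a})}\combvec[\groupind_j]$ are both linear in $\mathbf{a}$ and hence additive under $\mathbf{a}_L+\mathbf{a}_R=\mathbf{a}$. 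If both $v^L_1=v^L_2$ and $v^R_1=v^R_2$ held, then $v_1=v^L_1+v^R_1=v^L_2+v^R_2=v_2$, contradicting $v_1\neq v_2$; hence at least one child has differing committed values and the main node descends into it, preserving the invariant. Since the match tree is a full binary tree of depth $\lceil\log_2\ngrad\rceil$, the descent terminates at a leaf labeled by a single partial gradient $\pgrad[\gind^\star]$ for which $v_1\neq v_2$.

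Finally, the main node locally computes $\pgrad[\gind^\star]$, and at least one committed value, say $v_{j^\star}$, must differ from it. Unrolling how $v_{j^\star}$ was obtained from the workers' responses, the discrepancy $v_{j^\star}-\pgrad[\gind^\star]$ is a nontrivial linear combination of the submitted error vectors, necessarily involving at least one worker $\wind\in\groupset[\groupind_{j^\star}]$ with $\err_\wind\neq\zero$; that worker is thereby marked malicious. The main obstacle is the inductive step: one must ensure the main node can propagate committed values between parent and children purely by arithmetic on its own received data, rather than asking the workers to re-commit, so that an adaptive adversary cannot introduce inconsistent errors across rounds and derail the descent. This is exactly what the linearity of the encoding/decoding construction in $\mathbf{a}$ provides.
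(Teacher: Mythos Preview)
Your descent argument matches the paper's: maintain the invariant that the two groups' committed values disagree at the current node, use $v_j = v_j^L + v_j^R$ to push the disagreement into at least one child, and iterate to a leaf. That part is fine and is essentially what the paper does.

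The gap is in your final paragraph. You conclude that $v_{j^\star}-\pgrad[\gind^\star]$ is a nontrivial linear combination of the workers' error vectors and therefore ``at least one worker $\wind\in\groupset[\groupind_{j^\star}]$ has $\err_\wind\neq\zero$; that worker is thereby marked malicious.'' But the main node does not see the individual $\err_\wind$'s; it only sees the aggregate $v_{j^\star}$. Knowing that \emph{some} term in a linear combination is nonzero does not, by itself, let the main node point to a specific worker. The lemma's conclusion is stronger than ``at least one group is wrong''; it says the workers' \emph{individual} claimed labels at the leaf are not all equal, so that one local computation singles out a malicious worker.

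The paper closes this gap by invoking \cref{rem: encoding-matrix} at the leaf in a specific way: when $\mathbf{a}=\mathbf{e}_{\gind^\star}$, the encoding matrix $\W^{(\mathbf{a})}$ has a single nonzero row, so each worker's transmitted (or, by subtraction, inferred) response at that step is precisely $W_{\gind^\star,\wind}$ times the worker's own claimed value of $\pgrad[\gind^\star]$ (and zero if $\allocmat_{\wind,\gind^\star}=0$). Hence the main node recovers a per-worker claim for $\pgrad[\gind^\star]$, compares each one to its local computation, and flags any worker whose claim disagrees. Your ``unrolling'' sentence needs to be replaced by this structural observation; without it, the identification step does not go through.

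A minor point: your parenthetical ``no fresh worker computation is required'' is misleading. Workers do transmit fresh responses at each level of the match tree (this is the communication overhead counted in \cref{theorem: schemeoverview}); \cref{rem: encoding-matrix} only says they need not recompute partial gradients or a new encoding matrix.
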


\section{Analysis and Discussion}
\label{sec:analysis}
In the following, we show that our proposed grouping strategy is optimal (in terms of the number of group comparisons) and discuss the figures of merit of our scheme.

\subsection{Optimality of the Grouping Strategy}
\label{sec: groupisoptimal}

\begin{theorem}
    Consider an \BGC scheme with $\replfact = \nmalicious + \nhonest$ with $1 \leq \nhonest \leq \nmalicious + 1$. For every selection of $m \leq \nmalicious_\roundind$ groups of workers, each of size at least $ \nworker - ( \nmalicious + \nhonest ) + 1 $, it is possible that all the groups return an identical erroneous response.
\end{theorem}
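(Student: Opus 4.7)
The plan is linear-algebraic. Fix any $m \leq \nmalicious_\roundind$ groups with associated decoding matrix $\B = (\combvec[1], \dots, \combvec[m]) \in \galpha^{\nworker \times m}$, where each $\combvec$ is any vector supported on $\groupset$ satisfying $\MDS \combvec = \mathbf{y}$, with $\mathbf{y} \defeq (0, \dots, 0, 1)^\tp$. Since each group's decoded response is $\G \mathbf{a} + \E \combvec$, the adversary's task reduces to finding a malicious set $\mathcal{S}$ with $|\mathcal{S}| \leq \nmalicious_\roundind$ and errors $\E_{\cdot, \mathcal{S}}$ such that the induced perturbation $\E \combvec$ is the same \emph{nonzero} vector $\mathbf{v} \in \galpha^{\graddim}$ across all $\groupind$; equivalently, $\E \B = \mathbf{v} \mathbf{1}^\tp$ with $\mathbf{v} \neq \zero$.

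The first step is to observe that $\mathbf{1}^\tp$ always lies in the row span of $\B$. Since $\MDS \combvec = \mathbf{y}$ for every $\groupind$, one has $\MDS \B = \mathbf{y} \mathbf{1}^\tp$, and reading off its last row yields $(\MDS)_{\rth+1, \cdot} \B = \mathbf{1}^\tp$. This holds regardless of group sizes and of the particular choice of $\combvec$ for groups larger than $\rth+1$, since any valid combining vector satisfies $\MDS \combvec = \mathbf{y}$.

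The second step is to extract a sparse representation via a minimality argument. Let $\mathbf{c}^{*} \in \galpha^{\nworker}$ be a vector of minimum Hamming weight satisfying $\mathbf{c}^{*\tp} \B = \mathbf{1}^\tp$, and set $\mathcal{S} \defeq \mathrm{supp}(\mathbf{c}^*)$. If the rows of $\B_{\mathcal{S}, \cdot}$ were linearly dependent, a non-trivial left kernel vector of $\B_{\mathcal{S}, \cdot}$, zero-padded to $\galpha^{\nworker}$, could be scaled and added to $\mathbf{c}^*$ so as to cancel one of its nonzero entries while preserving $\mathbf{c}^{*\tp} \B = \mathbf{1}^\tp$---contradicting minimality. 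Hence the rows of $\B_{\mathcal{S}, \cdot}$ are linearly independent, giving $|\mathcal{S}| = \rank(\B_{\mathcal{S}, \cdot}) \leq \rank(\B) \leq m \leq \nmalicious_\roundind$.

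The adversary now designates $\mathcal{S}$ as the malicious workers and, for any nonzero $\mathbf{v} \in \galpha^{\graddim}$, sets worker $\wind \in \mathcal{S}$'s error to $\err_\wind = (\mathbf{c}^*)_\wind \mathbf{v}$, that is, $\E = \mathbf{v} \mathbf{c}^{*\tp}$ (supported on at most $\nmalicious_\roundind$ columns). Then $\E \B = \mathbf{v} (\mathbf{c}^{*\tp} \B) = \mathbf{v} \mathbf{1}^\tp$, so every group decodes to the same incorrect response $\G \mathbf{a} + \mathbf{v}$. The main obstacle is establishing the sparsity bound on $\mathcal{S}$; this is precisely where the hypothesis $m \leq \nmalicious_\roundind$ enters, through $\rank(\B) \leq m$. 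Everything else is a direct linear-algebraic construction.
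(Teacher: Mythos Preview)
Your proof is correct and follows essentially the same approach as the paper's: establish that $\one^\tp$ lies in the row span of $\B$, use $\rank(\B)\le m\le \nmalicious_\roundind$ to extract a row-combination supported on at most $\nmalicious_\roundind$ indices, and build the rank-one error matrix from it. Your derivation of $\one^\tp\in\mathrm{rowspan}(\B)$ directly from $\MDS\B=\mathbf{y}\,\one^\tp$ is slightly cleaner than the paper's route via $\Z\B=g\,\one^\tp$ (which tacitly needs $g\neq 0$), and your minimality argument makes the sparse-support step explicit where the paper asserts it in one line, but the underlying ideas coincide.
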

\begin{proof}
    Since the main node is unaware of the malicious workers' identities, the worst-case group selection results in each group having at least one malicious worker. It is then sufficient to show that given $\B$ as defined above, there exists a non-zero error matrix $\E$ with at most $\nmalicious$ non-zero columns such that $\E\B = \one \in \galpha^{\graddim\times \nmalicious}$ (the situation in which all groups return an identical corrupted response, see \cref{eq:decode-group-value}). For simplicity, we consider $\graddim = 1$ and drop superscripts $\roundind$ and $\mathbf{a}$. The generalization to an arbitrary value of $\graddim$ is straightforward.

    Observe the rank of $\B$ is at most $\nmalicious$. Since  $\Z\B = g\one \in \galpha^{1 \times \nmalicious}$, the all-one vector must be in the row span of $\B$. Since only at most $\nmalicious$ rows of $\B$ can be linearly independent, there exists a set of $\nmalicious$ rows of $\B$ whose span contains the all-one vector. Hence, there exists a non-zero matrix $\E$  with at most $\nmalicious$ non-zero columns such that $\E\B = g\one \in \galpha^{1 \times \nmalicious}$.
\end{proof}

From the above theorem, we see that our proposed grouping strategy is optimal in terms of the number of groups ($\nmalicious + 1$) that we formed for group comparisons.

\subsection{Figures of Merit}
\label{sec: Communication Overhead}

Recall that as long as the main node does not compute all partial gradients locally, i.e., $\localcomp < \ngrad $, the replication factor of any \BGC scheme is lower-bounded as $\replfact \geq \nmalicious + 1$. Moreover, if $\localcomp = 0$, then the replication factor of any \BGC scheme is lower-bounded as $\replfact \geq 2\nmalicious + 1$~\cite{draco}. Conversely, if \mbox{$\replfact \geq 2s + 1$}, then
$\localcomp = \commoh = 0$ is achievable using error-correcting codes, which is also reflected in our scheme. For \mbox{$\nmalicious + 1 \leq \replfact \leq 2\nmalicious + 1$}, the scheme's capabilities are summarized in \cref{theorem: schemeoverview}.

Our analysis disregards the communication cost from the main node to the workers. The amount of information communicated from the main node to the workers is negligible since, at each match of an elimination tournament, the main node communicates only one bit, indicating the descent direction in the match tree.
Similar to the fractional repetition scheme and the fractional repetition scheme in~\cite{hofmeisterTradingCommunication2023}, the resulting communication overhead in the presented scheme is logarithmic in the number of partial gradients $\ngrad$, which is (practically) much larger than the number of workers.
We remark that for $\nhonest=1$, the fractional repetition scheme is a special case of the scheme presented in this paper where the group size is $1$.
For $\nhonest > 1$, the number of local computations required by the presented scheme is higher than the fractional repetition scheme.
In turn, the presented scheme applies to a much larger class of data assignments, namely the set of regular data assignments $\assignmentset$.
An open question is whether the number of local computations can be reduced similarly for all regular data assignments.
Future research includes the analysis of fundamental limits for general data assignments, improvements (if possible) of the local computation load by leveraging additional redundancy, and the search for other efficient \BGC schemes.

\clearpage
\balance
\bibliographystyle{IEEEtran}

\clearpage
\appendix
\subsection{Proof of \cref{theorem: schemeoverview}}

By its construction in \cref{sec:scheme}, the scheme is guaranteed to eliminate at least $1$ malicious worker per round. Therefore, it terminates after a finite number of rounds. If the main node still finds contradicting groups, it outputs the correct full gradient $\tgrad$ by decoding over an error-correcting code, cf. \cref{cor:apply_DRACO}. Otherwise, the main node finds the correct full gradient directly from the groups' responses, cf. \cref{lemma:achievable-group-number}. Hence, the presented scheme is a valid \BGC scheme.

A local computation is only run once per round when the main node finds two contradicting groups (without any local computations). As long as $\nmalicious_\roundind > 0$, our grouping strategy is guaranteed to find two contradicting groups. Given two contradicting groups, the elimination tournament always finds contradicting claimed values for one partial gradient. Since this procedure identifies one malicious worker in each round, after $\nmalicious + 1 - \nhonest$ rounds, the correct full gradient can be obtained without any more local gradient computations according to \cref{cor:apply_DRACO}. Thus, at most $\localcomp \leq \nmalicious + 1 - \nhonest$.

Finally, we bound the communication overhead, which is due to the elimination tournament played in each round. Having identified two contradicting groups with group responses $\gestim^{(\groupind_1,\one)} \neq \gestim^{(\groupind_2,\one)}$, the main node can run the elimination tournament only on a single coordinate for which the group responses differ. Thus, at each level of the match tree, each competing worker transmits one symbol from $\galpha$. 
The number of competing workers is $\card{ \groupset[{\groupind_1}] \cup \groupset[{\groupind_2}]} = \rth + 2$.
The number of levels corresponds to the height of the match tree (excluding the root node), which is $\lceil \log_2(\ngrad)\rceil$. So $(\rth+2)\lceil \log_2(\ngrad)\rceil$ symbols are transmitted from the workers to the main node each round. Since there are at most $\nmalicious + 1 - \nhonest$ rounds, we have $\commoh \leq (\rth+2)(\nmalicious + 1 - \nhonest)\lceil \log_2(\ngrad)\rceil$.
     
\subsection{Proof of \cref{lemma:achievable-group-number}}
\label{sec : lemma3-proof}

Assume the presence of at most $\nmalicious$ malicious workers. Consider a selection of $\nmalicious_\roundind + 1$ groups as in the grouping scheme in \cref{eq:grouping_scheme}. For simplicity, we drop the round index $\roundind$ in the sequel.
For the sake of presentation, we prove the statement for scalar partial gradients $\pgrad$, i.e., $\graddim = 1$. For $\graddim > 1$, the same line of arguments applies for each of the $\graddim$ gradient dimensions, and the proof generalizes straightforwardly.

\newcommand{\mset}{\ensuremath{{\mathcal{E}}}}
Denote the index set of all malicious workers (this set is unknown to the main node) as $\mset$.
According to \cref{eq:decode-group-value}, the main node obtains the response
\begin{equation*}
    \gestim^{(\groupind,\one)} = \G \one + \E \combvec = \tgrad + \E \combvec
\end{equation*}
from the initial response of group $\groupset$, $\groupind \in \range{\nmalicious+1}$.
Thus, each group's response is decoded as the correct full gradient plus an error term. To prove the lemma, it suffices to show that the error terms from at least two groups differ or that all groups' error terms are zero. If all the error terms are zero, the main node immediately obtains the correct full gradient. Therefore, we suppose there is a non-zero error term for the remainder of the proof.

We jointly write all groups' error terms using the decoding matrix\footnote{Although defined as a matrix in general, $\E$ denotes a vector here since we assume $\graddim=1$. We abuse notation here for the sake of clarity.} $\B$ as $(\E \combvec[1],\dots,\E \combvec[\nmalicious+1]) = \E \B = \E_{\cdot,\mset} \B_{\mset,\cdot}$.
Showing that at least two error terms differ is now equivalent to showing that the equation system given by $\E_{\cdot,\mset} \B_{\mset,\cdot} = \lambda \one_{1 \times (\nmalicious+1)}$ is inconsistent (has no solution) for any choice of $\mset$, where $\lambda$ is some non-zero scalar. For ease of presentation, we work with $\B^\tp$ and $\E^\tp$ (so the columns of $\B^\tp$ correspond to the workers and the rows of $\E^\tp$ correspond to the respective errors they induce). Thus, we consider the transposed system of equations $\B_{\mset,\cdot}^\tp \E_{\cdot,\mset}^\tp = \lambda \one_{(\nmalicious+1) \times 1}$.%
\footnote{Note that $\B_{\mset,\cdot}^\tp$ refers to, first, restricting the matrix $\B$ to the rows indexed by $\mset$, and afterwards, transposing the restricted matrix.}
Showing that this system of linear equations is inconsistent is equivalent to showing that there exists a pivot in the last column of the echelon form of the augmented matrix (refer \cref{sec: pivot} or \cite{Treil_2017}). 

Now consider the $(\nmalicious+1)\times(\nmalicious+1)$ augmented matrix $(\B_{\mset,\cdot}^\tp \mid \one_{(\nmalicious+1) \times 1})$ and show that it contains a pivot in the last column through row and column operations. Out of the $\nmalicious$ malicious workers, assume $k$ are in the root group $\mathcal{R}=\{j_1,\dots,j_\rth\}$, cf. \cref{eq:grouping_scheme}. Denote these workers $E_1, E_2, .... , E_k$ and the remaining $\nmalicious-k$ workers as $E_{k+1}, ... , E_\nmalicious$. Observe that the column of $\B^\tp$ corresponding to worker $E_i$, $i>k$ (who is not in the root group) will only contain one non-zero entry corresponding to the group the respective worker is in. Thus, we can rearrange the columns of $(\B_{\mset,\cdot}^\tp \mid \one_{(\nmalicious+1) \times 1})$ so that the first $\nmalicious-k$ columns correspond to malicious workers not in the root group. Then, we can rearrange the first $\nmalicious-k$ rows in a way such that the submatrix of $(\B_{\mset,\cdot}^\tp \mid \one_{(\nmalicious+1) \times 1})$ given by the first $\nmalicious-k$ rows and columns is a diagonal matrix, and the submatrix of $(\B_{\mset,\cdot}^\tp \mid \one_{(\nmalicious+1) \times 1})$ given by the last $k+1$ rows and first $\nmalicious-k$ columns is a zero matrix. Swapping rows and columns rearranges the equations and does not change the equation system. Without loss of generality, assume that the last $k$ columns of the rearranged $\B_{\mset,\cdot}^\tp$ correspond to workers $E_1, E_2, .... , E_k$ and the first $\nmalicious-k$ columns of the rearranged $\B_{\mset,\cdot}^\tp$ correspond to workers $E_{k+1}, E_{k+2}, .... , E_\nmalicious$. Now, to show a pivot in the last column of the augmented matrix exists, it suffices to show that the determinant of the submatrix of the rearranged $(\B_{\mset,\cdot}^\tp \mid \one_{(\nmalicious+1) \times 1})$ given by the last $k+1$ rows and columns is non-zero (since this would imply that there exists a set of row operations which would change this sub-matrix to the identity matrix). Next, we find the entries of this determinant. We already know the entries in the last column by definition.

To find the remaining entries, we explicitly determine $\combvec$. Recall that $\combvec$ is given by the last column of $\MDS_{\cdot, {\groupset}}^{-1}$.
Notice that due to the Vandermonde structure of $\MDS$, each worker $\wind$ is associated with a distinct evaluation point $\omega_\wind \in \galpha$.
The closed-form expression for the inverse of a Vandermonde matrix is known and given in~\cite{Rawashdeh2019ASM}. Hence, for a group $\groupset$ of $\rth+1$ workers, we find the entry of $\combvec$ corresponding to worker $\wind \in \groupset$, which is given by $\left( \prod_{\ell \in \groupset \setminus \{\wind\}}(\omega_\wind - \omega_\ell) \right)^{-1}$. Refer to \cref{sec: b-closedform} for the detailed derivation.

\begin{figure*}[t]
\begin{equation}
\label{eq:reduced_matrix}
\begin{pmatrix}
 \frac{1}{\omega_{E_1} - \omega_{\wind_{\rth+(\nmalicious-k+1)}}} & \frac{1}{\omega_{E_2} - \omega_{\wind_{\rth+(\nmalicious-k+1)}}} & \hdots & \frac{1}{\omega_{E_k} - \omega_{\wind_{\rth+(\nmalicious-k+1)}}} &1\\ 
 \frac{1}{\omega_{E_1} - \omega_{\wind_{\rth+(\nmalicious-k+2)}}} & \frac{1}{\omega_{E_2} - \omega_{\wind_{\rth+(\nmalicious-k+2)}}} & \hdots & \frac{1}{\omega_{E_k} - \omega_{\wind_{\rth+(\nmalicious-k+2)}}} &1\\ 
 \vdots & \vdots & \ddots & \vdots & \vdots \\
 \frac{1}{\omega_{E_1} - \omega_{\wind_{\rth+(\nmalicious+1)}}} & \frac{1}{\omega_{E_2} - \omega_{\wind_{\rth+(\nmalicious+1)}}} & \hdots & \frac{1}{\omega_{E_k} - \omega_{\wind_{\rth+(\nmalicious+1)}}} &1
\end{pmatrix}
\end{equation}
\hrule
\end{figure*}

Now consider the column of $(\B_{\mset,\cdot}^\tp \mid \one_{(\nmalicious+1) \times 1})$ (in the rearranged form we just obtained) corresponding to worker $E_i$ ($i \leq k)$.
All entries in this column have the common non-zero factor $\left( \prod_{\ell \in \mathcal{R} \setminus \{E_i\}}(\omega_{E_i} - \omega_\ell) \right)^{-1}$. Therefore, the determinant of the lower-right $(k+1) \times (k+1)$ submatrix of the rearranged $(\B_{\mset,\cdot}^\tp \mid \one_{(\nmalicious+1) \times 1})$ is non-zero iff the determinant of the matrix in \cref{eq:reduced_matrix} is non-zero.
In \cref{eq:reduced_matrix}, we denote by $\omega_{\wind_{\rth+\groupind}}$ the evaluation point corresponding to the worker $\wind_{\rth+\groupind}$ of $\groupset$ that is not in the root group, cf. \cref{eq:grouping_scheme}.
The determinant of this matrix resembles the determinant of a Cauchy matrix (if the last column were removed, it would be a Cauchy matrix). The determinant of a Cauchy matrix is stated in \cite{schechterInversionCertainMatrices1959}; the above (Cauchy-like) determinant can be shown to be similar and non-zero by row and column reduction. Refer to \cref{sec: cauchy} for the proof.

\subsection{Pivot Analysis of a System of Linear Equations}
\label{sec: pivot}

For a full description of pivot analysis, refer to \cite{Treil_2017}.
Here, only the part on pivot analysis used in this work is replicated from \cite{Treil_2017} for the reader's convenience.
\begin{lemma}
    In a system of linear equations $\mathbf{Ax} = \mathbf{b}$, the system is inconsistent (has no solution) if and only if the echelon form of the augmented matrix $(\mathbf{A} \mid \mathbf{b})$ has a pivot in the last column.
\end{lemma}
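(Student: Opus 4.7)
The plan is to prove the biconditional in two directions, leveraging the standard fact that elementary row operations are invertible and therefore preserve the solution set of a linear system; thus the echelon form of $(\mathbf{A}\mid\mathbf{b})$ encodes a system with exactly the same solutions as $\mathbf{Ax}=\mathbf{b}$.

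For the ``only if'' direction, I would observe that a pivot in the last column of the echelon form means that some row has the form $(0,0,\dots,0 \mid c)$ with $c\neq 0$. Viewed as an equation in the unknowns $x_1,\dots,x_n$, this row reads $0\cdot x_1 + \dots + 0\cdot x_n = c$, which has no solution. Since row operations preserve the solution set, the original system must also be inconsistent.

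For the converse direction, suppose no pivot appears in the last column. Then every non-zero row of the echelon form has its pivot strictly within the first $n$ columns (those associated with $\mathbf{A}$), and any all-zero rows correspond to the tautology $0=0$ and can be ignored. I would then construct an explicit solution by back-substitution: assign arbitrary values (say $0$) to the free variables (columns of $\mathbf{A}$ without pivots), and then, processing the pivot rows from bottom to top, solve for each pivot variable in terms of variables that have already been assigned. Each step is well-defined because the pivot entry in its row is by definition nonzero, so division by it is legitimate. The resulting vector solves the echelon system and, by the invariance under row operations, also solves $\mathbf{Ax}=\mathbf{b}$, making the system consistent.

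The main obstacle here is essentially presentational rather than mathematical: this is a classical textbook fact (the authors themselves defer to Treil's \emph{Linear Algebra Done Wrong} for a full treatment), so the challenge is to state the two directions crisply without getting tangled in indexing. The two conceptual ingredients one must not lose sight of are (i) the invariance of the solution set of a linear system under elementary row operations, and (ii) the structural property of echelon form that cleanly separates pivot from free variables, making both the contradiction in one direction and the back-substitution in the other immediate.
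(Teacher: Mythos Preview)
Your argument is correct and follows essentially the same approach as the paper: one direction reads off the impossible equation $0=c\neq 0$ from a pivot in the augmented column, and the other constructs a solution by back-substitution when no such pivot exists (the paper simply says one ``reads off'' the solution from the row-reduced echelon form). One small presentational slip: you have the direction labels reversed---in ``inconsistent iff pivot in last column,'' the ``only if'' direction is inconsistent $\Rightarrow$ pivot, whereas your paragraph labeled ``only if'' actually proves pivot $\Rightarrow$ inconsistent.
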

\begin{proof}
    The backward direction is trivial since it would imply that there exists an equation consistent with the given set of equations such that $\mathbf{0}^\tp\mathbf{x} = 1$.
    For the forward direction, if we cannot obtain a row $(0, 0, \dots, 0 \mid 1)$ in the augmented matrix, we can just read off the solution from the row-reduced echelon form of the augmented matrix.
\end{proof}

\subsection{Determining the Combining Vector $\mathbf{b}$}
\label{sec: b-closedform}
Recall that each of the $\nworker$ workers is associated with a distinct element $\omega_\wind \in \galpha$.
\begin{lemma}
    Consider an index set $\mathcal{U}$ of size $\rth+1$ where $\rth = \nworker - (\nmalicious+\nhonest)$. With $\combvec$ given as in \cref{lemma:encoding-matrix}, the $\wind$-th entry of $\combvec$ (corresponding to worker $\wind$) is given by $\left( \prod_{\ell \in \groupset \setminus \{\wind\}}(\omega_\wind - \omega_\ell) \right)^{-1}$.
\end{lemma}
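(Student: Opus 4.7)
The plan is to characterize $\combvec$ as the unique solution of the Vandermonde linear system $\MDS_{\cdot,\mathcal{U}}\,\combvec = (0,\dots,0,1)^\tp$ given by \cref{eq:combvec}, and then verify that the claimed closed form satisfies this system. Rather than invoking the explicit Vandermonde-inverse formula from~\cite{Rawashdeh2019ASM}, I would give a short self-contained derivation via Lagrange interpolation, which makes the appearance of the product $\prod_{\ell \in \mathcal{U}\setminus\{j\}}(\omega_j - \omega_\ell)$ transparent.

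First, since $\MDS$ is the generator matrix of a Reed-Solomon code with distinct evaluation points $\omega_1,\dots,\omega_\nworker$, the submatrix $\MDS_{\cdot,\mathcal{U}}$ is (for a suitable row ordering) the $(\rth+1)\times(\rth+1)$ Vandermonde matrix in the points $\{\omega_j : j \in \mathcal{U}\}$. Writing $b_j$ for the $j$-th entry of $\combvec$, the defining equation reduces to the scalar identities
\begin{equation*}
\sum_{j\in\mathcal{U}} b_j\,\omega_j^k = 0 \;\; \text{for } k=0,\dots,\rth-1, \qquad \sum_{j\in\mathcal{U}} b_j\,\omega_j^{\rth} = 1.
\end{equation*}

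The key tool is the classical Lagrange-interpolation identity: for any polynomial $q(x)$ of degree at most $\rth$, the coefficient of $x^{\rth}$ in $q$ equals $\sum_{j\in\mathcal{U}} q(\omega_j) \big/ \prod_{\ell\in\mathcal{U}\setminus\{j\}}(\omega_j - \omega_\ell)$. Applying this to $q(x)=x^k$ for $k=0,\dots,\rth-1$ yields $0$, and applying it to $q(x)=x^{\rth}$ yields $1$. Hence the choice $b_j = \bigl(\prod_{\ell\in\mathcal{U}\setminus\{j\}}(\omega_j-\omega_\ell)\bigr)^{-1}$ simultaneously satisfies all $\rth+1$ equations above. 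Since $\MDS_{\cdot,\mathcal{U}}$ is invertible (its determinant is a non-zero Vandermonde determinant), the linear system has a unique solution, so the closed form must coincide with $\combvec$.

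The only anticipated obstacle is bookkeeping: fixing the row/column orientation used for $\MDS$ in the paper so that it matches the Vandermonde convention under which Lagrange interpolation is invoked, and double-checking that the sign convention in $(\omega_j-\omega_\ell)$ versus $(\omega_\ell-\omega_j)$ cancels consistently across the $\rth$ factors in both numerator and denominator. An alternative is to quote~\cite{Rawashdeh2019ASM} and simply read off the last column of the stated inverse; but the Lagrange-interpolation route is shorter and exposes why this particular product shows up.
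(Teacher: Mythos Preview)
Your argument is correct. The paper takes a different route: it quotes the closed-form expression for the inverse of a Vandermonde matrix from~\cite{Rawashdeh2019ASM}, observes via $(\mathbf{Z}^{-1})^\tp=(\mathbf{Z}^\tp)^{-1}$ that the last column of $\MDS_{\cdot,\groupset}^{-1}$ equals the last row of the inverse of the transposed Vandermonde matrix, and then reads off $v^{-1}_{n,j}=\bigl(\prod_{m\neq j}(x_j-x_m)\bigr)^{-1}$ from that formula. Your approach instead verifies the candidate solution directly against the defining system $\MDS_{\cdot,\mathcal{U}}\,\combvec=(0,\dots,0,1)^\tp$ using the Lagrange leading-coefficient identity, and then appeals to uniqueness. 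The Lagrange route is self-contained, avoids the transpose bookkeeping, and makes the structural reason for the product $\prod_{\ell\neq j}(\omega_j-\omega_\ell)$ explicit; the paper's route is shorter if one is willing to cite~\cite{Rawashdeh2019ASM}, and it yields the full inverse matrix as a byproduct (only the last row is used here, but the general entry formula is on hand). Either way the content is the same computation read forwards or backwards.
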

\begin{proof}
    Given a Vandermonde matrix
    \begin{equation*}
    \mathbf{V} =
    \begin{bmatrix}
        1 & x_1 & x^2_1 & \dots & x^{n-1}_1\\
        1 & x_2 & x^2_2 & \dots & x^{n-1}_2\\
        \vdots & \vdots & \vdots  & \ddots & \vdots \\
        1 & x_n & x^2_n & \dots &x^{n-1}_n
    \end{bmatrix}.
    \end{equation*}
    From~\cite{Rawashdeh2019ASM}, its inverse can be given by 
    \begin{equation*}
        v^{-1}_{i,j} = \frac{(-1)^{n-i}e_{n-i}(\{x_1,x_2,...,x_n\} \backslash\{x_j\})}{\prod^n_{m=1, m\neq j}(x_j - x_m)},
    \end{equation*}
    where
    \begin{equation*}
        e_m(\{y_1,...,y_k\}) = \sum_{1 \leq j_1 < ... < j_m\leq k}y_{j_1}...y_{j_m} \text{ for } m = 0,1,2,...,k.
    \end{equation*}
    Observe that the last row of the inverse of $\mathbf{V}$ can be given as $$v^{-1}_{n,j} = \frac{1}{\prod^n_{m=1, m\neq j}(x_j - x_m)}$$

    Recall the definition of $\combvec$ from \cref{eq:combvec}, which is the last column of $\MDS_{\cdot, {\groupset}}^{-1}$.
    Using this, and the fact that for any matrix $\mathbf{Z}$, $(\mathbf{Z}^{-1})^\tp = (\mathbf{Z}^\tp)^{-1}$ (inferring that the last column of the inverse of a matrix is the last row of the inverse of the transposed matrix), we obtain the result.
\end{proof}

\subsection{Determinant of Cauchy-like Matrix}
\label{sec: cauchy}

\begin{lemma}
    Any determinant of the form $$\begin{vmatrix}
 \frac{1}{\zeta_1 - \delta_1}& \frac{1}{\zeta_2 - \delta_1} & \hdots & \frac{1}{\zeta_k - \delta_{1}} &1\\ 
 \frac{1}{\zeta_1 - \delta_2}& \frac{1}{\zeta_2 - \delta_2} & \hdots & \frac{1}{\zeta_k - \delta_{2}} &1\\ 
 \vdots & \vdots & \ddots & \vdots & \vdots \\
 \frac{1}{\zeta_1 - \delta_{k+1}}& \frac{1}{\zeta_2 - \delta_{k+1}} & \hdots & \frac{1}{\zeta_k - \delta_{k+1}} &1 \notag
\end{vmatrix}$$

where $\zeta_1, \zeta_2, \dots, \zeta_{k}, \delta_1, \delta_2, \dots, \delta_{k+1}$ are distinct elements of any field and $1$ being the multiplicative identity, is non-zero.
\end{lemma}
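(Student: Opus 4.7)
My plan is to reduce the given Cauchy-like determinant to a genuine Cauchy determinant by a single round of elementary row operations followed by a cofactor expansion, and then to invoke the classical closed form for the Cauchy determinant. Since all extracted factors will be ratios of pairwise differences of the $\zeta$'s and $\delta$'s, distinctness will immediately yield non-vanishing.

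Concretely, I would first subtract the last row from each of the first $k$ rows. The last-column entries of rows $1,\dots,k$ become $0$, while for $i\le k$ and $j\le k$ the $(i,j)$ entry collapses to
\[
  \frac{1}{\zeta_j-\delta_i}-\frac{1}{\zeta_j-\delta_{k+1}}
  =\frac{\delta_i-\delta_{k+1}}{(\zeta_j-\delta_i)(\zeta_j-\delta_{k+1})}.
\]
Expanding along the last column, which now has a single non-zero entry ($=1$) in position $(k+1,k+1)$, reduces the original $(k+1)\times(k+1)$ determinant to a $k\times k$ determinant. I would then pull the factor $\delta_i-\delta_{k+1}$ out of row $i$ and the factor $(\zeta_j-\delta_{k+1})^{-1}$ out of column $j$, leaving precisely the classical $k\times k$ Cauchy matrix $\left(\tfrac{1}{\zeta_j-\delta_i}\right)_{i,j=1}^{k}$.

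At this point I would apply the well-known Cauchy determinant identity to obtain
\[
  \det M \;=\; \frac{\prod_{i=1}^{k}(\delta_i-\delta_{k+1})\,\prod_{1\le i<j\le k}(\zeta_j-\zeta_i)(\delta_i-\delta_j)}{\prod_{j=1}^{k}(\zeta_j-\delta_{k+1})\,\prod_{i,j=1}^{k}(\zeta_i-\delta_j)},
\]
where $M$ denotes the matrix in the lemma statement. The hypothesis that $\zeta_1,\dots,\zeta_k,\delta_1,\dots,\delta_{k+1}$ are pairwise distinct guarantees that every factor on the right is a non-zero field element, so the determinant is non-zero. I do not foresee a real obstacle: the argument is entirely elementary and the only thing to watch is bookkeeping of the extracted factors. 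The one external ingredient is the Cauchy determinant identity itself, which holds over any field (e.g., by induction on $k$ or by reducing to a quotient of Vandermonde determinants) and is what justifies restricting attention to the claim that the numerator in the above expression does not vanish.
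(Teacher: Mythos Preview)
Your argument is correct. Subtracting the last row from each of the first $k$ rows does zero out the last column above the diagonal, the cofactor expansion along that column is clean, and the factorisation into a row factor $\delta_i-\delta_{k+1}$ and a column factor $(\zeta_j-\delta_{k+1})^{-1}$ is exactly right, leaving the genuine $k\times k$ Cauchy matrix $\bigl(\tfrac{1}{\zeta_j-\delta_i}\bigr)$. Since the Cauchy determinant identity is a polynomial identity valid over any field in which the denominators are units, invoking it is legitimate, and distinctness makes every factor in your closed form non-zero.

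The paper takes a different route: rather than reducing in one step to a standard Cauchy matrix and citing its determinant, it performs a sequence of column and row reductions that shrink the matrix by one dimension while preserving its Cauchy-like shape (last column of $1$'s), and then iterates this down to the $2\times2$ base case $\begin{vmatrix}\tfrac{1}{\delta_k-\zeta_k}&1\\ \tfrac{1}{\delta_{k+1}-\zeta_k}&1\end{vmatrix}$, which is visibly non-zero. In effect, the paper is re-deriving a variant of the Cauchy determinant by hand and so is fully self-contained, whereas your proof is shorter and yields an explicit product formula but leans on the classical identity as an external ingredient. Both are valid; yours is the more economical if the Cauchy formula is taken as known.
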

\begin{proof}
    We use the following row and column reduction procedure (let $a_{ij}$ denote the entry in the $i$-th and $j$-th column):
    \begin{enumerate}
        \item Take out $-1$ from the first k columns. Now subtract the first column from all the other columns except the last. We now obtain:
        \begin{itemize}
            \item For $2 \leq j \leq k$:
            \begin{align*}
                a_{ij} &= \frac{1}{\delta_i - \zeta_j} - \frac{1}{\delta_i - \zeta_1}\\
                 & = \frac{\zeta_j - \zeta_1}{(\delta_i - \zeta_j)(\delta_i - \zeta_1)}
            \end{align*}
            \item and
            \begin{align*}
                a_{i,k+1} &= 1 - \frac{1}{\delta_i - \zeta_1}\\
                 &= \frac{\delta_i - \zeta_1 - 1}{\delta_i - \zeta_1}
            \end{align*}
            \item From columns $2$ to $k$ we can extract $\zeta_j - \zeta_1$ and we can extract $\frac{1}{\delta_i - \zeta_1}$ from each row.
        \end{itemize}
        The determinant is now of form:
        $$
        \begin{vmatrix}
         1& \frac{1}{\delta_1 - \zeta_2} & \dots & \frac{1}{\delta_{1} - \zeta_{k}}&\delta_1 - \zeta_1 - 1\\ 
         1& \frac{1}{\delta_2 - \zeta_2} & \dots & \frac{1}{\delta_{2} - \zeta_{k}}&\delta_2 - \zeta_1 - 1\\ 
         \vdots & \vdots & \ddots & \vdots & \vdots \\
        1 & \frac{1}{\delta_{k+1} - \zeta_2} & \dots  &\frac{1}{\delta_{k+1} - \zeta_{k}} & \delta_{k+1} - \zeta_1 - 1\notag
        \end{vmatrix}
        $$
        Observe that the multi-linearity of the determinant further allows us to remove the $-1$ from the last column.

        \item Subtract row 1 from all other rows. We obtain:
        \begin{itemize}
            \item $a_{1,j} = 0$
            \item $a_{i,k+1} = \delta_i - \delta_1$
            \item For $i\geq 2$ and $2\leq j \leq k$
            \begin{align*}
                a_{ij} &= \frac{1}{\delta_i - \zeta_j} - \frac{1}{\delta_1 - \zeta_j}\\
                &= \frac{\delta_1 - \delta_i}{(\delta_i - \zeta_j)(\delta_1 - \zeta_j)}
            \end{align*}
        \item Now extract $\delta_1 - \delta_i$ from all the rows except the first, $\frac{1}{\delta_1 - \zeta_j}$ from columns $2$ to $k$ and finally extract $-1$ from the last column.
        \end{itemize}
        The determinant is now of form:
        $$
        \begin{vmatrix}
         1& 1 &  1& \dots & 1&\delta_1 - \zeta_1 - 1\\ 
         0 & \frac{1}{\delta_2 - \zeta_2} & \frac{1}{\delta_2 - \zeta_3} & \dots &  \frac{1}{\delta_{2} - \zeta_{k}}&1\\ 
         0 & \frac{1}{\delta_3 - \zeta_2} & \frac{1}{\delta_3 - \zeta_3} & \dots &  \frac{1}{\delta_{3} - \zeta_{k}}&1\\ 
         \vdots & \vdots & \vdots & \ddots & \vdots & \vdots \\ 
        0 & \frac{1}{\delta_{k+1} - \zeta_2} &  \frac{1}{\delta_{k+1} - \zeta_3} & \dots &\frac{1}{\delta_{k+1} - \zeta_{k}} & 1\notag
        \end{vmatrix}
        $$
        \item Expanding the determinant along the first column, we obtain:
        $$
        \begin{vmatrix}
         \frac{1}{\delta_2 - \zeta_2} & \frac{1}{\delta_2 - \zeta_3} & \dots &  \frac{1}{\delta_{2} - \zeta_{k}}&1\\ 
         \frac{1}{\delta_3 - \zeta_2} & \frac{1}{\delta_3 - \zeta_3} & \dots &  \frac{1}{\delta_{3} - \zeta_{k}}&1\\ 
         \vdots & \vdots & \ddots & \vdots & \vdots \\ 
        \frac{1}{\delta_{k+1} - \zeta_2} &  \frac{1}{\delta_{k+1} - \zeta_3} & \dots &\frac{1}{\delta_{k+1} - \zeta_{k}} & 1\notag
        \end{vmatrix}
        $$
        \item We see the determinant is of identical form to what we started with. Hence, repeating the above steps, we finally obtain:
        $$
        \begin{vmatrix}
         \frac{1}{\delta_{k} - \zeta_{k}}&1\\ 
        \frac{1}{\delta_{k+1} - \zeta_{k}} & 1\notag
        \end{vmatrix}
        $$
        \item If the above determinant was zero, then $\delta_k = \delta_{k+1}$, which is a contradiction as they are two distinct elements. Thus, the determinant is non-zero.
    \end{enumerate}
\end{proof}

\subsection{Proof of \cref{rem: partialgradient}}
\label{proof: matchtree}

During a match, the main node obtains values for labels for all nodes along a path from the root to a leaf from two competing groups $\groupset[{\groupind_1}],\groupset[{\groupind_2}]$. This is possible due to \cref{rem: encoding-matrix} since both groups contain $\rth + 1$ workers. Each worker encodes the computed partial gradients according to the requested partial sum into $\iresp_{\wind}^{(\mathbf{a})}$. The main node obtains the claimed partial sums as $\Z_{\groupset[{\groupind_1}]}^{(\mathbf{a})} \combvec[{\groupind_1}]$ and $\Z_{\groupset[{\groupind_2}]}^{(\mathbf{a})} \combvec[{\groupind_2}]$. If these values differ, there must be a child node for which the labels differ. If the groups agree on the left child node, they must differ on the right child node and vice versa. Observe that by knowing the label of one of the child nodes, the main node can infer the label of the other child node since the parent node's label is known. The tournament then advances to one of the child nodes in the match tree, again decoding the claimed labels. Since the two groups disagree on the root node, we see by induction that there is a path from the root to the leaf such that the groups disagree on all the labels in this path and, thus, on the leaf. By \cref{rem: encoding-matrix}, we see that each worker has committed to a claimed value of the single partial gradient corresponding to this leaf since when the main node requests a single partial gradient from a group, each worker is requested only to send their own claimed value for that gradient. This claimed value is zero if the worker is not assigned the partial gradient to compute. The main server finally obtains a contradicting partial gradient $\pgrad$ corresponding to this leaf node from all workers in $\groupset[{\groupind_1}]$ and $\groupset[{\groupind_2}]$ that have computed this partial gradient and locally computes $\pgrad$.

\end{document}